\documentclass[draftclsnofoot,onecolumn,12pt]{IEEEtran}
\usepackage{amsmath,amsfonts}
\usepackage{bm}
\usepackage{amssymb}
\usepackage{enumerate}
\usepackage{algorithmic}
\usepackage{array}
\usepackage{textcomp}
\usepackage{stfloats}
\usepackage{url}
\usepackage{verbatim}
\usepackage{graphicx}
\usepackage{cite}
\usepackage{hyperref}
\usepackage{amsmath}
\usepackage{algorithm}
\allowdisplaybreaks[4]
\usepackage{float} 
\usepackage{multirow} 
\usepackage{ntheorem}
\usepackage{balance}
\usepackage{enumitem}
\usepackage{caption}
\usepackage{subcaption}
\usepackage{arydshln}
\usepackage{upgreek}
\usepackage{setspace}
\usepackage{epsf}
\usepackage{upgreek}
\usepackage{balance}
\usepackage{footnote}
\usepackage{color}

\usepackage{setspace}
\doublespacing

\captionsetup[figure]{name={Fig.},labelsep=period,justification=raggedright,singlelinecheck=off} 
\captionsetup[table]{name={TABLE},labelsep=space}
\hyphenation{op-tical net-works semi-conduc-tor IEEE-Xplore}
\theorembodyfont{\upshape}
\theoremheaderfont{\it}
\qedsymbol{\square}
\theoremseparator{:}

\newtheorem{lemma}{\indent Lemma}
\newtheorem*{proof}{\indent Proof}
\newtheorem{remark}{\indent Remark}

\newtheorem{proposition}{\indent Proposition}
\makeatletter

\newcommand{\Rmnum}[1]{\expandafter\@slowromancap\romannumeral #1@}
\hypersetup{
    colorlinks=true,
    linkcolor=blue,
    filecolor=blue,
    urlcolor=blue,
    citecolor=blue,
}
\begin{document}
\begin{sloppypar}
\bstctlcite{BSTControl}
\title{Energy-Efficient Hybrid Beamfocusing for Near-Field Integrated Sensing and Communication}
\author{\mbox{Wenhao Hu, \IEEEmembership{Student Member, IEEE}}, \mbox{Zhenyao He, \IEEEmembership{Student Member, IEEE}}, \mbox{Wei Xu, \IEEEmembership{Fellow, IEEE}}, \mbox{Yongming Huang, \IEEEmembership{Fellow, IEEE}}, \mbox{Derrick Wing Kwan Ng, \IEEEmembership{Fellow, IEEE}}, and \mbox{Naofal Al-Dhahir, \IEEEmembership{Fellow, IEEE}}
}

\markboth{ 
        }
{Energy-Efficient Hybrid Beamfocusing for Near-Field Integrated Sensing and Communication}
\maketitle
\begin{abstract}

Integrated sensing and communication (ISAC) is a pivotal component of sixth-generation (6G) wireless networks, leveraging high-frequency bands and massive multiple-input multiple-output (M-MIMO) to deliver both high-capacity communication and high-precision sensing. However, these technological advancements lead to significant near-field effects, while the implementation of M-MIMO \mbox{is associated with considerable} hardware costs and escalated power consumption. In this context, hybrid architecture designs emerge as both hardware-efficient and energy-efficient solutions. Motivated by these considerations, we investigate the design of energy-efficient hybrid beamfocusing for near-field ISAC under two distinct target scenarios, i.e., a point target and an extended target. Specifically, we first derive the closed-form Cram\'{e}r-Rao bound (CRB) of joint angle-and-distance estimation for the point target and the Bayesian CRB (BCRB) of the target response matrix for the extended target. Building on these derived results, we minimize the CRB/BCRB by optimizing the transmit beamfocusing, while ensuring the energy efficiency (EE) of the system and the quality-of-service (QoS) for communication users. To address the resulting \mbox{nonconvex problems}, we first utilize a penalty-based successive convex approximation technique with a fully-digital beamformer to obtain a suboptimal solution. Then, we propose an efficient alternating \mbox{optimization} algorithm to design the analog-and-digital beamformer. \mbox{Simulation} results indicate that joint distance-and-angle estimation is feasible in the near-field region. However, the adopted hybrid architectures inevitably degrade the accuracy of distance estimation, compared with their fully-digital counterparts. Furthermore, enhancements in system EE would compromise the accuracy of target estimation, unveiling a nontrivial tradeoff. 


\end{abstract}
\begin{IEEEkeywords}
Integrated sensing and communication, near-field, beamfocusing, Cram\'{e}r-Rao bound, Bayesian CRB.
\end{IEEEkeywords}
\vspace{-1em}
\section{Introduction}

\IEEEPARstart{R}{eal}-time location awareness and ubiquitous transmission are emerging as defining features of future sixth-generation (6G) wireless networks \cite{FLiuIntegrated}. These advancements are crucial in enhancing connectivity and facilitating situational awareness across a wide range of applications, such as smart cities and industrial automation. In this context, integrated sensing and communication (ISAC) has been recognized as a promising technology for 6G. Specifically, ISAC enables effective spectrum sharing and hardware reuse for both communication and sensing, thereby enhancing spectral efficiency and reducing hardware costs, which supports various emerging services including environmental monitoring and autonomous driving \cite{ZhenyaoHeUnlocking},\cite{shi2023intelligent}. Therefore, both industry and academia are vigorously pursuing the development of ISAC to create a unified signal processing framework and develop innovative hardware platforms \cite{JAndrewZhangEnabling}.\par

One of the main challenges in establishing a unified framework for ISAC lies in the design of practical waveforms \cite{WXuEdge}. Typically, ISAC waveform design is categorized into the time-frequency domain and spatial domain. On the one hand, in the time-frequency domain, waveform research has primarily focused on technological innovations in modulation and signal embedding. For example, the authors of \cite{GEAFrankenDoppler} utilized orthogonal frequency division multiplexing (OFDM) signals for target velocity estimation. Also in \cite{YLiuDesign}, researchers investigated radar pulse signal transmission in a multiple-input multiple-output (MIMO) OFDM system, enhancing the communication rate by transmitting multiple OFDM symbols in a single pulse. Building upon these developments, \cite{THuangMAJoRCom} introduced an inter-pulse modulation design, where a radar pulse is encoded as a single communication codeword. \par
On the other hand, in the spatial domain, waveform research has mainly concentrated on beamforming design for resource allocation in both sensing and communication. For instance, in \cite{FLiuMUMIMO}, the authors aimed to minimize the mean squared error (MSE) between the designed beampattern and the reference beampattern for sensing. Similarly, \cite{XLiuJoint} minimized a radar loss function, which is comprised of a weighted combination of beampattern MSE and mean squared cross-correlation. Moving beyond directly optimizing beampatterns, other studies have focused on investigating the nontrivial tradeoff between communication and sensing performance metrics. The authors of \cite{BKChalisePerformance} optimized beamforming to effectively balance the tradeoff between target detection probability and communication achievable rate. Furthermore, in a full-duplex ISAC scenario, \cite{ZHeFullDuplex} tackled the problem of maximizing system sum rate and minimizing power consumption, subject to radar signal-to-interference-plus-noise ratio (SINR) constraints. Moreover, the studies in \cite{LiuZiangJoint} designed uplink and downlink beamforming vectors via maximizing the weighted sum rate, while simultaneously enhancing the beamforming gain towards the target for sensing. Lastly, \cite{FLiuCramerRao} investigated the beamforming design for a point target and an extended target, respectively, with the objective of minimizing the Cram\'{e}r-Rao bound (CRB) while satisfying SINR requirement constraints for communication users. These studies are devoted to enhancing estimation accuracy and communication rates. However, the limitations imposed by the number of antennas restrict the spatial multiplexing capability and spatial resolution of the array, thereby impacting the overall communication and sensing performance of the system. \par
In practice, to provide additional degrees of freedom (DoF) and enhanced spatial resolution, massive multiple-input multiple-output (M-MIMO) technology has emerged as a viable solution \cite{WXuToward}. However, this technology significantly increases both hardware costs and power consumption of systems, hindering their widespread implementations. To strike an effective balance between performance and hardware cost, hybrid architectures are indispensable for the practical realization of M-MIMO \cite{OEAyachSpatially}. In this context, \cite{FLiuHybrid} and \cite{XWangPartially} studied ISAC hybrid beamforming designs from different perspectives. In particular, M-MIMO's high power demands not only increased operational costs but also raised environmental concerns, rendering energy efficiency (EE) a critical design metric for M-MIMO \cite{KNRSVPrasadEnergy}. This focus on EE is essential for realizing the objectives of green ISAC and ensuring sustainable development \cite{YaoJiachengEnergy}. For instance, \cite{AlluRavitejaRobust} explored the design of a robust ISAC system aimed at maximizing system EE through resource optimization. Similarly, \cite{ZHeEnergy} optimized beamforming designs to improve ISAC EE in fully-digital architectures, while \cite{LYouBeam} addressed the same issue within the framework of hybrid architectures. In addition, \cite{ZouJiaqiEnergy} introduced a novel performance metric, namely sensing EE, and proposed a joint optimization framework that simultaneously considers both sensing EE and communication EE.\par

As exploration of M-MIMO technology progresses, fundamental changes occur in electromagnetic propagation characteristics, notably transitioning from the far-field region to the near-field region. This shift introduces a new resource dimension related to communication distance, allowing for the concentration of signal energy within specific distances and angles toward a target or user, known as beamfocusing \cite{CuiMingyaoNearField}. Indeed, the availability of the distance dimension offers novel approaches to distance estimation in sensing \cite{YDHuangNearField}. For near-field ISAC, \cite{ZWangNearField} proposed the beamfocusing design for a point target considering the distance estimation, but it did not fully consider multiple types of targets, limiting its applicability in more complex sensing scenarios. Moreover, \cite{DGalappaththigeNear} optimized the beamfocusing design under a fully-digital architecture. However, since M-MIMO typically employs hybrid architectures in practical applications, the proposed designs face limitations in real-world deployments. Although our previous work \cite{WenhaoHuIntegrated} investigated the beamfocusing design for an extended target under partially connected architectures, it neither demonstrated the availability of the near-field distance dimension, nor developed beamfocusing designs for point target scenarios. Therefore, the ISAC beamfocusing designs for M-MIMO with hybrid architectures necessitate further investigation, especially for estimating different target characteristics. \par
In practice, the joint waveform design for ISAC in near-field scenarios introduces several technical challenges: \textit{i)} The introduction of the distance dimension complicates the modeling of performance metrics for targets and users; \textit{ii)} While hybrid architecture-based M-MIMO can reduce hardware costs, the design of beamfocusing becomes more complex. Moreover, designing  the hybrid beamfocusing scheme to achieve high-precision distance and angle estimation while ensuring the EE of the system poses a significant challenge; \textit{iii)} Due to the limited DoFs in hybrid architectures, critical challenges arise, such as the potential nonexistence of the CRB for target response matrix (TRM) estimation in the case of an extended target. Motivated by these discussions, we investigate the M-MIMO beamfocusing design for hybrid analog-and-digital architectures. The main contributions of this paper are \mbox{as follows}:
\begin{itemize}[label=\textbullet]
  \item We establish a rectangular coordinate system to characterize the near-field channel for ISAC and derive the theoretical CRB for angle-and-distance estimation of a point target under the hybrid architecture. For an extended target, due to the limitations of DoFs and the absence of a conventional CRB for TRM estimation under the hybrid architecture, we resort to exploiting the Bayesian CRB (BCRB) to estimate the TRM.
  
  \item We formulate optimization problems aimed at minimizing the CRB of angle-and-distance estimation for a point target and minimizing the BCRB of TRM estimation for an extended target while ensuring the system EE and communication QoS requirements. To address the inherent challenges of nonconvex design problems, we develop a two-stage algorithm. Specifically, we first introduce an equivalent fully-digital beamfocusing approach to simplify the original problem and leverage the successive convex approximation (SCA) technique to address the nonconvexities of the problems, yielding a suboptimal solution. Then, we propose an efficient block coordinate descent algorithm to approximate the solution via a more practical hybrid beamfocusing design. 
 
  \item Numerical results demonstrate the effectiveness of the proposed algorithms. The near-field beamfocusing design enables simultaneous estimation of both angle and distance, demonstrating superior performance compared to the two-dimensional multiple signal classification ($\mathrm{2}\mathrm{D}$ MUSIC) algorithm. Additionally, our results reveal a nontrivial tradeoff between target estimation performance and system EE. In particular, as the EE of the system improves, the accuracy of the target estimation decreases.
\end{itemize}
This paper is organized as follows. \mbox{In Section \Rmnum{2}}, the system setting, near-field channel model, performance metrics, and problem formulation are described. \mbox{Section \Rmnum{3}} analyzes the CRB for angle-and-distance estimation of a point target and presents the corresponding beamfocusing solution. \mbox{In Section \Rmnum{4}}, the BCRB for TRM estimation of an extended target is derived and the beamfocusing solution is obtained. \mbox {Section \Rmnum{5}} discusses the iterative optimization of hybrid analog-and-digital beamformer, implemented in both fully-connected and partially-connected architectures. \mbox {In Section \Rmnum{6}}, the simulation results are presented. \mbox {Section \Rmnum{7}} summarizes the paper.

\textit{Notations:} Bold uppercase and lowercase letters stand for the matrices and column vectors, respectively. $\mathbb{C}^{M \!\times\! N}$ signifies $M\! \times \!N$ dimensional complex space. $ \mathbb{N}\in\{0,1,\dots\} $ denotes the set of nonnegative integers. The trace, expectation, and Kronecker product operators are presented as $ \mathrm{Tr}\{\!\cdot\!\}$, $ \mathbb{E}\{\!\cdot\!\} $, and $\otimes$, respectively. Real and imaginary parts of $x$ are represented by $\mathrm{Re}(x)$ and $\mathrm{Im}(x)$, respectively. Operations $ |x| $ and $\lceil x \rceil$ indicate the absolute value and ceiling value of $ x $, respectively. Nuclear norm, Frobenius norm, and $l_2$-norm are denoted by $ \|\!\cdot\!\|_{*} $, $ \|\!\cdot\!\|_\mathrm{F} $, and $\|\!\cdot\!\|_{2}$, respectively. Transpose, conjugate, and Hermitian conjugate are written as $(\cdot)^T$, $(\cdot)^{*}$, and $(\cdot)^H$, respectively. The $ i $-th row and the $ j $-th column element of matrix $ \mathbf{A} $ is denoted by $ [\mathbf{A}]_{i,j} $, an $N \!\times\! N$ dimensional identity matrix is indicated as $\mathbf{I}_{N}$. The rank of a matrix is represented as $ \mathrm{rank}(\cdot) $. $ \lambda_\mathrm{max}(\mathbf{A})$ denotes the maximum eigenvalue of the matrix $ \mathbf{A} $. The operation $ \mathrm{arg}(\cdot) $ represents the computation of the phase angle of a vector or matrix. The operation $ \mathrm{vec}(\cdot) $ denotes the matrix vectorization operation. The operation $\frac{\partial\mathbf{A}}{\partial\bm\gamma_i} $ denotes the partial derivative of the matrix $\mathbf{A}$ with respect to the $ i$-th element of vector $ \bm\gamma$. $\mathcal{CN}(\bm\mu,\,\bm\Sigma)$ is the distribution of a circularly symmetric complex Gaussian random vector with mean vector $\bm\mu$ and covariance matrix $ \bm\Sigma $.

\section{System Model and Problem Formulation}
\vspace{0.6em}

\subsection{System Model}
We consider a near-field hybrid multi-user multiple-input single-output (MU-MISO) ISAC system, where the BS is equipped with $ N_\mathrm{t} $ transmit antennas and $ N_\mathrm{t}^\mathrm{RF} $ RF chains, cf., Fig. \ref{figNearField}. Meanwhile, the BS serves $ K $ single-antenna users while simultaneously detecting the existence of a single target, satisfying $ K \!\leq\! N_\mathrm{t}^\mathrm{RF} \!\leq\! N_\mathrm{t} $. To improve the performance of sensing echo signals, we assume that a fully-digital antenna array with $ N_\mathrm{r} $ antennas is employed at the BS's receiver side, positioned separately from the transmitting array to eliminate transceiver self-interference \cite{XWangPartially}. Let $ \mathbf{s}_k\!\in\!\mathbb{C}^{L \times 1}, \,\forall k\!\in\!\{1,\dots,K\}$, denote the downlink baseband data symbol vector for the $k$-th user, where $ L $ represents the length of a communication/radar frame. Then, define $ \mathbf{S}\!=\![\mathbf{s}_{1},\dots,\mathbf{s}_{K}]^T\!\in\!\mathbb{C}^{K \times L} $ as the downlink data symbol matrix. Without loss of generality, we assume that data streams for users are statistically independent and the covariance matrix of $\mathbf{S}$ is given by $\mathbf{R_S}\!=\!\frac{1}{L}\mathbf{SS}^H\!\approx\!\mathbf{I}_{K}$. This approximation property is primarily justified by the law of large numbers. When the frame length $ L $ is sufficiently large, the matrix $ \mathbf{R_S} $ converges to its expected value and approximation property remains asymptotically \cite{LuShihangAjoint}. Under the hybrid beamfocusing architecture, the transmit signal  $ \mathbf{X} \in \mathbb{C}^{N_\mathrm{t} \times L} $ over duration $L$ from the ISAC BS can be written as
\vspace{-0.1em}
\begin{equation} \label{transmitSignal}
  \mathbf X=\mathbf{T}_\mathrm{A}\mathbf{T}_\mathrm{D} \mathbf{S},
  \vspace{-0.1em}
\end{equation}
where $ \mathbf{T}_\mathrm{D}\!\triangleq\![\mathbf{t}_{\mathrm{D},1},\mathbf{t}_{\mathrm{D},2},\dots,\mathbf{t}_{\mathrm{D},K}]\!\in\!\mathbb{C}^{N_\mathrm{t}^\mathrm{RF}\times K} $ denotes the digital beamformer, $ \mathbf{t}_{\mathrm{D},k}\in\mathbb{C}^{N_\mathrm{t}^\mathrm{RF}\times 1} $ is the digital beamfocusing vector for the $k$-th user, and $\mathbf{T}_\mathrm{A}\!\in\!\mathbb{C}^{N_\mathrm{t}\times N_\mathrm{t}^\mathrm{RF}}$ denotes the analog beamformer. 
\begin{figure}[t]
  \centering
  \includegraphics[scale=0.5]{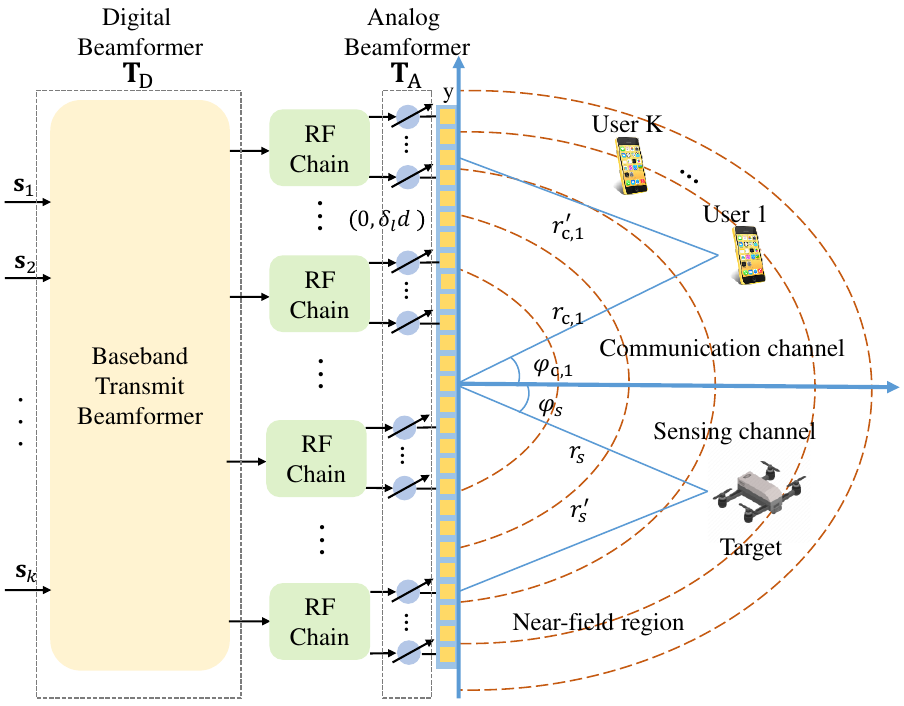}
  \caption{A near-field ISAC system with hybrid beamfocusing.}
  \label{figNearField}
  \vspace{-0.6em}
\end{figure}
The sample covariance matrix of $ \mathbf{X} $ is written as 
\vspace{-0.2em}
\begin{equation} \label{covMatrix}
    \mathbf{R}_\mathrm{\mathbf{X}} =\frac{1}{L}\mathbf{X}\mathbf{X}^{H}=\mathbf{T}_\mathrm{A}\mathbf{T}_\mathrm{D}\mathbf{T}_\mathrm{D}^{H}\mathbf{T}_\mathrm{A}^{H},
    \vspace{-0.2em}
\end{equation}
where $ \mathbf{R}_\mathrm{\mathbf{X}}\in\mathbb{C}^{N_\mathrm{t}\times N_\mathrm{t}} $. Note that due to the limitation of the number of RF chains, the maximum rank of matrix $\mathbf{R}_\mathrm{\mathbf{X}} $ is upper bounded by $ N_\mathrm{t}^\mathrm{RF} $, i.e., $ \mathrm{Rank}(\mathbf{R}_\mathrm{\mathbf{X}}) \leq N_\mathrm{t}^\mathrm{RF}\! <\! N_{\mathrm{t}}$.
\vspace{-0.2em}
\setlength{\skip\footins}{8pt}  

\subsection{Near-Field Channel Model}
For simplicity, we consider the utilization of a uniform linear array (ULA) at the transmitter to construct the near-field channel model, primarily to simplify the theoretical analysis and provide a foundational framework for geometric configurations\footnote{ Considering practical applications, our future studies will consider the near-field channel model based on the uniform planar array (UPA) to enhance the adaptability and accuracy of the system \cite{shi2024secrecy}.}, as commonly adopted in the literature, e.g., \cite{AlluRavitejaRobust, ZHeEnergy, JZouEnergy, LYouBeam}. As shown in Fig. \ref{figNearField}, a rectangular coordinate system is established at the center of ULA, with the array element spacing, $ d $, set to half of the signal carrier wavelength $ \lambda $, i.e., $ d\!=\!\frac{\lambda}{2} $ \cite{MCuiNear}. The coordinate of the $ l $-th element of ULA is expressed as $ (0,\delta_l d) $, with $ \delta_l \! =\! \frac{2l-N_\mathrm{t}-1}{2},\,\forall l\!\in\!\{1,2,\cdots,N_\mathrm{t}\} $. Then, the distance between a specific user equipment (UE)/target and the $ l $-th element of the ULA is $ r_l'(\varphi)\!=\!\sqrt{r^2+\delta_l^2 d^2-2r\delta_ld\sin\varphi} $, where $ r $ represents the distance between the UE/target and the center of the transmitting ULA and $ \varphi \!\in\! (-\pi/2,\pi/2) $ denotes the specific angle of the UE/target. Then, the near-field steering vector is given by
\begin{equation} \label{steeringVector}
 \mathbf{b}(r,\varphi)=[e^{-j\frac{2\pi}{\lambda}(r_1^{\prime}(\varphi)-r)},\dots,e^{-j\frac{2\pi}{\lambda}(r_{N_\mathrm{t}}^{\prime}(\varphi)-r)}]^T.
\end{equation}
Applying the Taylor series expansion $ \sqrt{1+x}\approx 1+\frac{1}{2}x-\frac{1}{8}x^2 + \mathit{o}(x^2) $ to $ r_l'(\varphi)\!=\!r\sqrt{1+\frac{1}{r^2}(\delta_l^2 d^2-2r\delta_ld\sin\varphi)}$, the difference $ r_l'(\varphi)-r $ is approximated as
\begin{equation} \label{disApproximate}
  r_l^{\prime}(\varphi)-r\approx-\delta_{l}d\sin\varphi+\delta_{l}^2d^2\frac{\cos^2\varphi}{2r}.
\end{equation}
Then, we consider the spherical wave channel model \cite{CuiMingyaoNearField} for the high-frequency near-field scenarios, including both line-of-sight (LoS) and non-line-of-sight (NLoS) paths. The channel vector $ \mathbf{h}_k\in\mathbb{C}^{N_\mathrm{t}\times 1} $, between the BS and the $k$-th user is given by
\vspace{-0.3em}
 \begin{equation} \label{channelModel}
  \mathbf{h}_k\!=\!\beta_{L}\mathbf{b}(r_{\mathrm{c},k},\varphi_{\mathrm{c},k})\!+\!\sum\limits_{i=1}^{I_k}\! \beta_{i} \mathbf{b}(r_{\mathrm{c},k,i},\varphi_{\mathrm{c},k,i}),
  \vspace{-0.3em}
\end{equation}
\noindent where $ \beta_{L}\!\in\!\mathbb{C} $ denotes the LoS complex gain, $ r_{\mathrm{c},k} $ and $ \varphi_{\mathrm{c},k} $ represent the distance and angle from the BS to the $ k $-th user, respectively, $ I_k $ denotes the number of NLoS paths for the $ k $-th user, $ \beta_{i}\!\in\!\mathbb{C} $ denotes the complex gain of the $ i $-th NLoS path, and $ r_{\mathrm{c},k,i} $ and $ \varphi_{\mathrm{c},k,i} $ represent the distance and angle from the BS to the $ i $-th scatter of the $ k $-th user, respectively. Additionally, since the near-field propagation typically occurs under high-frequency conditions, not only conventional path loss but also molecular absorption loss should be taken into account. Accordingly, the gains for $ \beta_{L}$ and $ \beta_{i}$ are modeled based on the high-frequency propagation characteristics in \cite{LinAdaptive} and \cite{CLinSubarrayBased}.

\subsection{Sensing Performance Metric}
The BS transmits $\mathbf{X}$ to support simultaneously user communication and single-target sensing. The received echo signal at the BS receiver, denoted by $ \mathbf{Y} \in \mathbb{C}^{N_\mathrm{r}\times L} $, is written as 
\vspace{-0.2em}
\begin{equation} \label{echoSignal}
  \mathbf{Y}=\mathbf{B}\mathbf{X} + \mathbf{N},
  \vspace{-0.2em}
\end{equation}
\noindent where $ \mathbf{B}\!\in\!\mathbb{C}^{N_\mathrm{r}\times N_\mathrm{t}} $ denotes the TRM \cite{BTangSpectrally}. In line with different target models, the TRM may admit different forms. Typically, the TRM is constructed as one of the two models below.

\subsubsection{Point Target}
In this context, the target is treated as an unstructured point located away from the BS \cite{MBellInformation}. The TRM is modeled as
\vspace{-0.4em}
\begin{equation} \label{pointTarRepMat}
    \mathbf{B}=\mu\mathbf{b}_\mathrm{r}(r_{\mathrm{s}},\varphi_{\mathrm{s}})\mathbf{b}_\mathrm{t}^H(r_{\mathrm{s}},\varphi_{\mathrm{s}}),
    \vspace{-0.4em}
\end{equation}
\noindent where $ \mu\in\mathbb{C} $ represents the complex-valued reflection coefficient of the target, $\mathbf{b}_\mathrm{r}(r_{\mathrm{s}},\varphi_{\mathrm{s}})\in\mathbb{C}^{N_\mathrm{r}\times 1}$ and $\mathbf{b}_\mathrm{t}(r_{\mathrm{s}},\varphi_{\mathrm{s}})\in\mathbb{C}^{N_\mathrm{t}\times 1}$ denote receive and transmit steering vectors, respectively, which are defined following \eqref{steeringVector}. Here, $ r_{\mathrm{s}} $ and $ \varphi_{\mathrm{s}}$ represent the corresponding distance and angle for the target, respectively. We consider a colocated MIMO radar architecture, where the direction of departure and direction of arrival to the same target are identical \cite{XLiuJoint}.
\subsubsection{Extended Target}
When the target is in close proximity to the BS, it is typically modeled as an extended target. In such a case, the target should be modeled as a surface comprising numerous distributed points-like scatterers \cite{ALeshemInformation}. Usually, the TRM is modeled as 
\vspace{-0.2em}
\begin{equation} \label{extendedTarRepMat}
  \mathbf{B}=\sum^{N}_{i=1}\mu_i\mathbf{b}_\mathrm{r}(r_{{\mathrm{s}},i},\varphi_{{\mathrm{s}},i})\mathbf{b}_\mathrm{t}^H(r_{{\mathrm{s}},i},\varphi_{{\mathrm{s}},i}),
  \vspace{-0.4em}
\end{equation}
\noindent where $ N $ is the number of scatterers, $\mu_i\!\in\!\mathbb{C}$ denotes the reflection coefficients of the $ i $-th scatterer, and $ \varphi_{{\mathrm{s}},i} $ and $ r_{{\mathrm{s}},i} $ represent the angle and distance of the $ i $-th scatterer, respectively. Notably, we assume that all point-like scatterers are located in the same range bin to facilitate the \mbox{target estimation \cite{FLiuCramerRao}}. \par
To improve the target estimation performance of the near-field ISAC system, the CRB and BCRB are employed as the sensing performance metrics, which are subsequently minimized through the optimization problem formulation. For a point target scenario, we employ the CRB for assessing the estimation performance, serving as a lower bound for the error variance of any unbiased estimator \cite{SMKayFundamentals}. In contrast to far-field ISAC scenarios, the CRB for the estimation of parameters is composed of both the target angle $ \varphi $ and the distance $ r $ in the near-field. Detailed procedures for analyzing the CRB will be presented later in Section \Rmnum{3}. 
On the other hand, for an extended target scenario, the BS lacks knowledge regarding the quantity of scatterers, as well as the positions of the scatterers. As a consequence, we directly estimate the TRM as an entity, as opposed to estimating the angles and distances of all scatterers. However, due to the fact that the number of RF chains in hybrid architectures should be much smaller than the number of antennas, a rank deficiency in the sample covariance matrix occurs \cite{JXuSecure}, rendering the nonexistence of the CRB for TRM estimation. This challenge motivates us to exploit the BCRB to estimate the TRM, with detailed derivations provided in Section \Rmnum{4}. On the other hand, to ensure the quality of service for communication users, we adopt SINR as the performance metric for communication component. 

\vspace{-1em}
\subsection{Communication Performance Metric}
The communication signal vector received at the $ k $-th user over duration $ L $ is expressed as
\vspace{-0.4em}
\begin{equation}\label{receSig}
  \mathbf{y}_k^T=\mathbf{h}_k^H\mathbf{t}_{k}\mathbf{s}_k^T+\!\sum_{i=1,i\neq k}^K\!\mathbf{h}_k^H\mathbf{t}_{i}\mathbf{s}_i^T+\mathbf{n}_k^T,
  \vspace{-0.6em}
\end{equation}
\noindent where $ \mathbf{n}_k\in\mathbb{C}^{L\times 1}$ denotes the additive white Gaussian noise (AWGN) at the $k$-th user, i.e., $\mathbf{n}_k\!\sim\!\mathcal{CN}(\mathbf{0},\,\sigma_{k}^{2}\mathbf{I}_L) $, and $ \sigma_k^2 $ is the noise power, $ \mathbf{t}_k\triangleq\mathbf{T}_{\mathrm{A}}\mathbf{t}_{\mathrm{D},k} $ represents the beamfocusing vector for the $ k $-th user. Furthermore, we assume that a channel estimation stage is performed prior to beamfocusing design, allowing the BS and users to perfectly acquire the related channel state information (CSI) \cite{CGTsinosJoint, WXuDisentangled}. Then, the SINR of the $k$-th user over duration $ L $ is given by 
\vspace{-0.2em}
\begin{equation}\label{SINR}
  \Gamma_{k}=\frac{|\mathbf{h}_k^H\mathbf{t}_{k}|^2}{\sum\limits_{i\neq k}^K|\mathbf{h}_k^H\mathbf{t}_{i}|^2+\sigma^2_k}.
  \vspace{-0.2em}
\end{equation}
The total communication achievable rate of the system is then expressed as $ R \triangleq \sum_{k=1}^K\log_2(1+\Gamma_k) $. Due to the use of high-frequency band and M-MIMO arrays, the system has inevitably exhibited substantial power consumption \cite{KNRSVPrasadEnergy}. Indeed, the pursuit of reducing transmission power while ensuring high-speed transmission emerges as a key design direction, drawing attention to the design of energy-efficient transmission. The system EE is defined as the ratio of the achievable rate to total power consumption, expressed as
\vspace{-0.4em}
\begin{equation}\label{EE}
  \eta_\mathrm{EE}=\frac{R}{P_\mathrm{total}},
  \vspace{-0.4em}
\end{equation}
where we adopt a linear power model to assess the total power consumption of the system \cite{ZHeEnergy}, denoted by $P_\mathrm{total}$, and \mbox{expressed as}
\vspace{-1.2em}
\begin{equation}\label{totalPower}
  P_{\mathrm{total}}=\frac{1}{\rho}\sum_{k=1}^K\|\mathbf{t}_{k}\|^2+P_0, 
  \vspace{-0.4em}
\end{equation}
where $ \rho \in (0,1]$ denotes the power amplifier coefficient, $ P_0 $ indicates the static circuit power consumption of the BS, and $ \sum_{k=1}^K\|\mathbf{t}_{k}\|^2 $ is the total radiated signal power.
\vspace{-0.4em}
\subsection{Problem Formulation}

The digital beamformer $ \mathbf{T}_{\mathrm{D}} $ and analog beamformer $ \mathbf{T}_{\mathrm{A}} $ are optimized to minimize the CRB of angle-and-distance estimation for a point target and the BCRB of TRM estimation for an extended target, while guaranteeing both system EE and communication QoS for users. Specifically, the problem is formulated as
\vspace{-0.7em}
\begin{subequations}\label{CRBopti} 
  \begin{align}
    \mathop{\mathrm{minimize}}\limits_{\mathbf{T}_\mathrm{D},\,\mathbf{T}_\mathrm{A}}  \qquad &\mathrm{Tr}\{\mathrm{CRB}\}\, \,\mathrm{or} \,\,\mathrm{Tr}\{\mathrm{BCRB}\}  \tag{\ref{CRBopti}}\\
    \mathrm{subject \;to} \qquad & \|\mathbf{T}_\mathrm{A}\mathbf{T}_\mathrm{D}\|^2_\mathrm{F}\leq P, \label{CRBopti_powCon} \\
      \qquad &   \Gamma_{k}\geq\Gamma_\mathrm{th},\,\forall k\in\{1,\cdots,K\},\label{CRBopti_SINRCon}\\ 
      \qquad &   \eta_\mathrm{EE}\geq \eta_\mathrm{th},\label{CRBopti_eeCon}\\
      \qquad &   \mathbf{T}_\mathrm{A} \in \mathcal{A},\label{CRBopti_unitCon}
   \end{align}
\end{subequations}
where \eqref{CRBopti_powCon} represents the total transmit power constraint, constraint \eqref{CRBopti_SINRCon} ensures a lower bound on the minimum required SINR for each user to guarantee the communication QoS, $ \eta_\mathrm{th} $ in constraint \eqref{CRBopti_eeCon} serves as a threshold to provide a minimum system EE requirement, \eqref{CRBopti_unitCon} represents the constant modulus constraint for the analog beamformer, which takes the following form according to the fully-connected or partially-connected architecture \cite{AAroraHybrid}. Specifically, every antenna in a fully-connected architecture is linked to all RF chains, while in a partially-connected architecture, the antennas are divided into $ N_\mathrm{t}^\mathrm{RF} $ different groups. Therefore, constraint \eqref{CRBopti_unitCon} indicates that the non-zero elements of the analog beamformer $ \mathbf{T}_\mathrm{A} $ belong to the set $\mathcal{A}\!=\!\{\mathcal{A}_\mathrm{F}, \mathcal{A}_\mathrm{P}\}$, given by
\vspace{-0.4em}
\begin{align}\label{unitConFulPart}
  &\mathcal{A}_{\mathrm{F}}=\left\{\mathbf{T}_\mathrm{A}\Big|\left|\left[\mathbf{T}_\mathrm{A}\right]_{p,q}\right|=1,\forall p,q\right\},\\
  &\mathcal{A}_{\mathrm{P}}=\left\{\mathbf{T}_\mathrm{A}\Big|\left|\left[\mathbf{T}_\mathrm{A}\right]_{p,q}\right|=1,\forall p,\forall q=\left\lceil\frac{p}{I_{\mathrm{g}}}\right\rceil\right\},
\end{align}
where $ I_{\mathrm{g}}\!\triangleq\!N_\mathrm{t}/N_\mathrm{t}^\mathrm{RF} $ denotes the number of antennas connected to the same RF chain within each group. The problem in \eqref{CRBopti} involves an objective function of an intractable CRB matrix and several constraints, which are nonconvex with respect to the two tightly coupled variables, i.e., digital and analog beamformers. To address these issues, we initially take into account the digital and analog beamformers as an equivalent fully-digital beamformer, as well as temporarily drop the unrelated constraints. The performance of this setup serves as an upper-bound benchmark for other hybrid architecture designs. Subsequently, we propose an algorithm to tackle the equivalent fully-digital near-field CRB optimization problem. Eventually, an alternating optimization framework for hybrid beamformers is exploited to approximate an equivalent fully-digital beamformer.

\begin{figure*}[b]  
  \hrulefill  
  \begin{equation}\label{expressionMatrixA}
    \mathbf{A}=\frac{2|\mu|^2L}{\sigma_{n}^2}
    \begin{bmatrix}
     \mathrm{Tr}(\dot{\mathbf{B}}_{r}^H\dot{\mathbf{B}_{r}}\mathbf{R}_\mathrm{\mathbf{X}})-\frac{|\mathrm{Tr}(\dot{\mathbf{B}}_{r}^H\mathbf{B}\mathbf{R}_{\mathrm{X}})|^2}{\mathrm{Tr}(\mathbf{B}^H\mathbf{B}\mathbf{R}_{\mathrm{X}})}& \mathrm{Tr}(\dot{\mathbf{B}}_{r}^H\dot{\mathbf{B}}_{\varphi}\mathbf{R}_\mathrm{\mathbf{X}})-\frac{\mathrm{Re}\{\mathrm{Tr}(\dot{\mathbf{B}}_{r}^H\mathbf{B}\mathbf{R}_{\mathrm{X}})\mathrm{Tr}(\dot{\mathbf{B}}_{\varphi}^H\mathbf{B}\mathbf{R}_{\mathrm{X}})^*\}}{\mathrm{Tr}(\mathbf{B}^H\mathbf{B}\mathbf{R}_{\mathrm{X}})}\\
      \mathrm{Tr}(\dot{\mathbf{B}}_{\varphi}^H\dot{\mathbf{B}}_r\mathbf{R}_{\mathrm{X}})-\frac{\mathrm{Re}\{\mathrm{Tr}(\dot{\mathbf{B}}_{r}^H\mathbf{B}\mathbf{R}_{\mathrm{X}})\mathrm{Tr}(\dot{\mathbf{B}}_{\varphi}^H\mathbf{B}\mathbf{R}_{\mathrm{X}})^*\}}{\mathrm{Tr}(\mathbf{B}^H\mathbf{B}\mathbf{R}_{\mathrm{X}})} & \mathrm{Tr}(\dot{\mathbf{B}}_{\varphi}^H\dot{\mathbf{B}_{\varphi}}\mathbf{R}_\mathrm{\mathbf{X}})-\frac{|\mathrm{Tr}(\dot{\mathbf{B}}_{\varphi}^H\mathbf{B}\mathbf{R}_{\mathrm{X}})|^2}{\mathrm{Tr}(\mathbf{B}^H\mathbf{B}\mathbf{R}_{\mathrm{X}})}\\
    \end{bmatrix}
\end{equation}
\end{figure*}
\vspace{0.4em}
\section{Beamfocusing Design for Point Target}
In contrast to the CRB of angle estimation in the far-field region, in this section, we begin by deriving the CRB of angle-and-distance estimation for a near-field point target. Then, we concentrate on the optimization problem that minimizes the sum of CRB for the point target's distance and angle estimation. Finally, we propose an SCA-based iterative algorithm to acquire an effective solution.


\vspace{-0.6em}
\subsection{CRB Analysis for Near-Field Point Target}
We first focus on deriving the CRB of angle-and-distance estimation for the scenarios considering a point target. The CRB expression is obtained by constructing the Fisher information matrix (FIM) with respect to the unknown parameters \cite{SMKayFundamentals}. Let the unknown parameters vector be defined as $ \bm{\delta}\!=\![\bm{\phi}^T \tilde{\bm\mu}^T]^T $, where $ \bm\phi\!=\![r,\varphi]^T $, $ \tilde{\bm\mu}\!=\!\left[\mu_{\mathrm{R}},\mu_{\mathrm{I}}\right]^T $, and $\mu_{\mathrm{R}}$ and $\mu_{\mathrm{I}}$ denote the real and imaginary parts of $ \mu $, respectively. Simultaneously, by vectorizing the received echo signal matrix in \eqref{echoSignal}, we have
\vspace{-0.4em}
\begin{equation} \label{vecReceSignal}
  \mathbf{y}=\mathrm{vec}(\mathbf{Y})=\bm\upsilon+\mathbf{n},
  \vspace{-0.4em}
\end{equation}
where $\bm\upsilon=\mathrm{vec}(\mu\mathbf{B}\mathbf{X})$, $\mathbf n=\mathrm{vec}(\mathbf N)$, and $ \mathbf{n} $ is a complex Gaussian noise vector, each element of which has zero mean and variance $ \sigma_{n}^2 $. Then, we have $ \mathbf{y}\sim\mathcal{CN}(\bm\upsilon,\,\mathbf{R}_{n}) $ with $ \mathbf{R}_{n}=\sigma_{n}^{2}\mathbf{I}_{N_\mathrm{r}L} $. Furthermore, the FIM under the complex Gaussian model is expressed as \cite{SMKayFundamentals}
\vspace{-0.4em}
\begin{equation}\label{FIMDefinition}
  [\mathbf{J}]_{i,j}\!=\!\mathrm{Tr}\!\left( \!\mathbf{C}^{-1}\frac{\partial\mathbf{C}}{\partial\bm\xi_i}\mathbf{C}^{-1}\!\frac{\partial\mathbf{C}}{\partial\bm\xi_j}\!\right)\!+\!2\mathrm{Re}\!\left\{\!\frac{\partial\bm\omega^H}{\partial\bm\xi_i}\mathbf{C}^{-1}\!\frac{\partial\bm\omega}{\partial\bm\xi_j}\!\right\},
  \vspace{-0.4em}
\end{equation}
where $ i,j\!\in\! \{1,\dots,n_{p}\}$, $ n_{{p}}$ denotes the dimension of $ \bm\xi $, $ \bm\xi $ denotes the unknown parameters vector, and $ \bm\omega$ and $ \mathbf{C} $ denote the mean vector and covariance matrix of $ \bm\xi $, respectively. According to \eqref{FIMDefinition}, the FIM $ \mathbf{J}_{\bm\delta}$ for the unknown parameters vector, $ \bm\delta $, is given by
\vspace{-0.4em}
\begin{equation}\label{FIMDiviDedMatrixPoint}
  \setlength{\arraycolsep}{3.5pt} 
  \mathbf{J}_{\bm\delta}\!=\!\left[\!\begin{array}{cc:cc}
  \!J_{{rr}}\!&\!J_{{r}\varphi}\!&\!J_{{r}\mu_{\mathrm{R}}}\!&\!J_{{r}\mu_{\mathrm{I}}}\!\\
  \!J_{\varphi {r}}\!&\!J_{\varphi\varphi}\!&\!J_{\varphi\mu_{\mathrm{R}}}\!&\!J_{\varphi\mu_{\mathrm{I}}}\!\\
  \hdashline
  \!J_{{r}\mu_{\mathrm{R}}}\!&\!J_{\varphi \mu_{\mathrm{R}}}\!&J_{\mu_{\mathrm{R}}\mu_{\mathrm{R}}}\!&0\!\\
  \!J_{{r}\mu_{\mathrm{I}}}\!&\!J_{\varphi\mu_{\mathrm{I}}}\!&\!0\!&J_{\mu_{\mathrm{I}}\mu_{\mathrm{I}}}\!\\
  \end{array}\!\right]\!=\!\left[\!\begin{array}{c:c}
    \!\mathbf{J}_{\bm\phi\bm\phi} \!&\! \mathbf{J}_{\bm\phi\tilde{\bm\mu}}\!\\
    \hdashline
    \!\mathbf{J}_{\bm\phi\tilde{\bm\mu}}^T\! &\! \mathbf{J}_{\tilde{\bm\mu}\tilde{\bm\mu}}\! \\
    \end{array}\!\right]\!,
    \vspace{-0.4em}
\end{equation} 
where the FIM is partitioned into four block matrices corresponding to the vectors $ \bm{\phi} $ and $ \tilde{\bm{\mu}} $. The expressions of block matrices $\mathbf{J}_{\bm\phi\bm\phi}$, $\mathbf{J}_{\bm\phi\tilde{\bm\mu}}$, and $ \mathbf{J}_{\tilde{\bm\mu}\tilde{\bm\mu}}$ are shown below, respectively, and derived in Appendix \ref{FIMPointDeriveAppendix}:
\vspace{-0.3em}
\begin{align}\label{FIMElements}
 &J_{uv}=\frac{2|\mu|^2L}{\sigma_{n}^2}\mathrm{Tr}(\dot{\mathbf{B}}_u^H\dot{\mathbf{B}}_v\mathbf{R}_\mathrm{\mathbf{X}}),\\
  &\mathbf{J}_{\bm\phi\tilde{\bm\mu}}=\frac{2L}{\sigma_{n}^2}\mathrm{Re}\left\{
    \begin{bmatrix}
      \mu^*\mathrm{Tr}(\dot{\mathbf{B}}_{r}^H\mathbf{B}\mathbf{R}_\mathrm{\mathbf{X}})\\
      \mu^*\mathrm{Tr}(\dot{\mathbf{B}}_{\varphi}^H\mathbf{B}\mathbf{R}_\mathrm{\mathbf{X}})
    \end{bmatrix} 
   [1,j] \right\},\\
  &\mathbf{J}_{\tilde{\bm\mu}\tilde{\bm\mu}}=\frac{2L}{\sigma_{n}^2}\mathbf{I}_2\mathrm{Tr}(\mathbf{B}^H\mathbf{B}\mathbf{R}_\mathrm{\mathbf{X}}), 
\end{align}
where $J_{uv},\forall u,v\in\bm\phi $, denotes the element of matrix $\mathbf{J}_{\bm\phi\bm\phi}$, and $ \dot{\mathbf{B}}_{r}=\frac{\partial\mathbf{B}}{\partial r} $, $ \dot{\mathbf{B}}_{\mathrm{\varphi}}=\frac{\partial\mathbf{B}}{\partial\varphi} $. According to the Schur complement property, the inverse of the FIM is represented using block matrices, shown as
\vspace{-0.2em}
\begin{equation}\label{FIMDiviDedMatrixPoint}
  \setlength{\arraycolsep}{2.5pt} 
  \mathbf{J}_{\bm\delta}^{-1}\!=\!\left[\!\begin{array}{c:c}
    \!\mathbf{A}^{-1}& \!-\mathbf{A}^{-1}\mathbf{J}_{\bm\phi\tilde{\bm\mu}}\mathbf{J}_{\tilde{\bm\mu}\tilde{\bm\mu}}\\
  \hdashline
  \!-\mathbf{J}_{\tilde{\bm\mu}\tilde{\bm\mu}}\mathbf{J}_{\bm\phi\tilde{\bm\mu}}^T\mathbf{A}^{-1}&  \mathbf{J}_{\tilde{\bm\mu}\tilde{\bm\mu}}^{-1}\!+\!\mathbf{J}_{\tilde{\bm\mu}\tilde{\bm\mu}}^{-1}\mathbf{J}_{\bm\phi\tilde{\bm\mu}}^T\mathbf{A}^{-1}\mathbf{J}_{\bm\phi\tilde{\bm\mu}}\mathbf{J}_{\tilde{\bm\mu}\tilde{\bm\mu}}^{-1}\\
  \end{array}\!\right]\!,
  \vspace{-0.2em}
\end{equation} 
where $\mathbf{A}\!\triangleq\!\mathbf{J}_{\bm\phi\bm\phi}\!-\!\mathbf{J}_{\bm\phi\tilde{\bm\mu}}\mathbf{J}_{\tilde{\bm\mu}\tilde{\bm\mu}}^{-1}\mathbf{J}_{\bm\phi\tilde{\bm\mu}}^T$. Then, the CRB with respect to the unknown parameters vector $ \bm{\phi} $ is expressed as
\vspace{-0.2em}
\begin{equation}\label{CRBexpressionTrace}
  \mathrm{CRB}(\bm{\phi})\!=\!\mathbf{A}^{-1}\!=\!\!\left(\!\mathbf{J}_{\bm\phi\bm\phi}\!-\!\mathbf{J}_{\bm\phi\tilde{\bm\mu}}\mathbf{J}_{\tilde{\bm\mu}\tilde{\bm\mu}}^{-1}\mathbf{J}_{\bm\phi\tilde{\bm\mu}}^T\!\right)\!^{-1}\!,
\end{equation}
where the specific expression of matrix $ \mathbf{A} $ is shown in \eqref{expressionMatrixA}. Furthermore, the CRB can be attained by exploiting maximum likelihood estimation (MLE) \cite{FLiuCramerRao}.

\vspace{-0.6em}
\subsection{Problem Formulation}\label{pointTargetDesignSection}
Based on the above analysis, the optimization problem of minimizing the CRB for the point target's distance and angle estimation can be formulated as follow
\begin{subequations}\label{CRBoptiExpression} 
   \begin{align}
    \color{blue}
    {\mathop{\mathrm{minimize}}\limits_{\mathbf{T}_\mathrm{D},\,\mathbf{T}_\mathrm{A}}}\quad &  {\mathrm{Tr}\{\mathrm{CRB}(\bm{\phi})\}}  \tag{\ref{CRBoptiExpression}}\\
    {\mathrm{subject \;to}} \quad & {\eqref{CRBopti_powCon},\,\eqref{CRBopti_SINRCon},\,\eqref{CRBopti_eeCon},\,\eqref{CRBopti_unitCon}.}
  \end{align}
\end{subequations}
Due to the complex form of the objective function, we equivalently transform problem \eqref{CRBoptiExpression} into a more tractable one as shown in the following proposition.
\begin{proposition}\label{proSchurConv}
  Minimizing $\mathrm{CRB}(\bm{\phi})$ is equivalent to solving the following problem
\end{proposition}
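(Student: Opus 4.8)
The plan is to recast the trace minimization of $\mathrm{CRB}(\bm\phi)=\mathbf{A}^{-1}$ as a linear (semidefinite) objective by introducing an auxiliary matrix variable. Since $\mathbf{A}$ is a $2\times2$ positive definite matrix, $\mathrm{Tr}\{\mathbf{A}^{-1}\}$ is not directly convex in $\mathbf{R}_\mathbf{X}$ (which enters $\mathbf{A}$ linearly through the trace terms, but appears inside the Schur-complement quotient nonlinearly). So first I would introduce a slack matrix $\mathbf{U}\in\mathbb{C}^{2\times2}$, $\mathbf{U}\succeq\mathbf{0}$ (or $\mathbf{U}=\mathbf{U}^H$), and replace $\mathrm{Tr}\{\mathbf{A}^{-1}\}$ by $\mathrm{Tr}\{\mathbf{U}\}$ subject to $\mathbf{U}\succeq\mathbf{A}^{-1}$, which by the Schur complement lemma is equivalent to the linear matrix inequality $\left[\begin{smallmatrix}\mathbf{U}&\mathbf{I}_2\\\mathbf{I}_2&\mathbf{A}\end{smallmatrix}\right]\succeq\mathbf{0}$. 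The equivalence of the two problems then rests on the standard fact that at optimality the constraint $\mathbf{U}\succeq\mathbf{A}^{-1}$ is tight (monotonicity of the trace under the positive semidefinite order), so the optimal $\mathrm{Tr}\{\mathbf{U}\}$ equals $\mathrm{Tr}\{\mathbf{A}^{-1}\}$ for the optimal beamformer.

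Next I would expose the dependence of $\mathbf{A}$ on the optimization variables. Using $\mathbf{R}_\mathbf{X}=\mathbf{T}_\mathrm{A}\mathbf{T}_\mathrm{D}\mathbf{T}_\mathrm{D}^H\mathbf{T}_\mathrm{A}^H$ and the expressions in \eqref{FIMElements} and \eqref{expressionMatrixA}, every entry of $\mathbf{A}$ is of the form ``linear in $\mathbf{R}_\mathbf{X}$'' minus ``a Schur-complement quotient term.'' The key observation to carry the reformulation through is that the subtracted quotient, $\mathbf{J}_{\bm\phi\tilde{\bm\mu}}\mathbf{J}_{\tilde{\bm\mu}\tilde{\bm\mu}}^{-1}\mathbf{J}_{\bm\phi\tilde{\bm\mu}}^T$, is itself a Schur complement of a larger, fully linear-in-$\mathbf{R}_\mathbf{X}$ Fisher-type matrix — namely $\mathbf{J}_{\bm\delta}$ itself. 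Hence $\mathbf{A}\succeq\mathbf{0}$ can be encoded without ever forming the quotient explicitly: the condition $\left[\begin{smallmatrix}\mathbf{U}&\mathbf{I}_2\\\mathbf{I}_2&\mathbf{A}\end{smallmatrix}\right]\succeq\mathbf{0}$ together with $\mathbf{A}=\mathbf{J}_{\bm\phi\bm\phi}-\mathbf{J}_{\bm\phi\tilde{\bm\mu}}\mathbf{J}_{\tilde{\bm\mu}\tilde{\bm\mu}}^{-1}\mathbf{J}_{\bm\phi\tilde{\bm\mu}}^T$ can be merged into a single LMI in the block matrix built from $\mathbf{J}_{\bm\phi\bm\phi}$, $\mathbf{J}_{\bm\phi\tilde{\bm\mu}}$, $\mathbf{J}_{\tilde{\bm\mu}\tilde{\bm\mu}}$, and $\mathbf{U}$, all of whose blocks are affine in $\mathbf{R}_\mathbf{X}$. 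This is the step I would present as the content of the proposition: minimizing $\mathrm{Tr}\{\mathrm{CRB}(\bm\phi)\}$ is equivalent to minimizing $\mathrm{Tr}\{\mathbf{U}\}$ over $\mathbf{T}_\mathrm{D},\mathbf{T}_\mathrm{A},\mathbf{U}$ subject to the merged LMI and the original constraints \eqref{CRBopti_powCon}--\eqref{CRBopti_unitCon}.

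The main obstacle I anticipate is not the algebra of the Schur complement but verifying the regularity conditions needed for the ``constraint is active at optimality'' argument: one needs $\mathbf{A}\succ\mathbf{0}$ (equivalently $\mathbf{J}_{\bm\delta}\succ\mathbf{0}$, i.e. the target parameters are identifiable) on the feasible set, otherwise $\mathrm{CRB}(\bm\phi)$ is not even well defined and the equivalence is vacuous. I would handle this by noting that the SINR constraint \eqref{CRBopti_SINRCon} forces $\mathbf{R}_\mathbf{X}\neq\mathbf{0}$ and, generically, $\mathrm{Rank}(\mathbf{R}_\mathbf{X})$ is large enough that $\dot{\mathbf{B}}_r$, $\dot{\mathbf{B}}_\varphi$, $\mathbf{B}$ have linearly independent ``images'' under $\mathbf{R}_\mathbf{X}$, so $\mathbf{A}\succ\mathbf{0}$; alternatively one restricts attention to the feasible set on which $\mathbf{A}$ is invertible, which is exactly where the original objective is meaningful. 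A secondary, minor obstacle is bookkeeping the real/imaginary split of $\mu$ so that the block $\mathbf{J}_{\tilde{\bm\mu}\tilde{\bm\mu}}=\frac{2L}{\sigma_n^2}\mathbf{I}_2\,\mathrm{Tr}(\mathbf{B}^H\mathbf{B}\mathbf{R}_\mathbf{X})$ is a scalar multiple of the identity, which is what makes the quotient term collapse to the clean form appearing in \eqref{expressionMatrixA}; I would simply cite Appendix~\ref{FIMPointDeriveAppendix} for this and move on.
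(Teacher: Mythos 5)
Your argument is correct and rests on the same two pillars as the paper's Appendix~\ref{propositionProof} --- tightness of the auxiliary constraint at optimality via monotonicity under the Loewner order, and the Schur complement lemma to turn the resulting constraint into an LMI that is affine in $\mathbf{R}_{\mathbf{X}}$ --- but you place the slack variable on the opposite side of the inversion, which lands you at a different (equivalent) reformulation than the one actually stated in problem \eqref{CRBconstraint}. The paper introduces $\bm\Xi$ with $\mathbf{0}\preceq\bm\Xi\preceq\mathbf{A}$, keeps the matrix-convex decreasing objective $\mathrm{Tr}(\bm\Xi^{-1})$, and applies one Schur complement to eliminate the quotient $\mathbf{J}_{\bm\phi\tilde{\bm\mu}}\mathbf{J}_{\tilde{\bm\mu}\tilde{\bm\mu}}^{-1}\mathbf{J}_{\bm\phi\tilde{\bm\mu}}^T$ from $\mathbf{A}-\bm\Xi\succeq\mathbf{0}$, yielding the $4\times4$ LMI \eqref{CRBconstraint_shurCon}. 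You instead introduce $\mathbf{U}\succeq\mathbf{A}^{-1}$, minimize the linear objective $\mathrm{Tr}(\mathbf{U})$, and need two Schur-complement steps (or, equivalently, one applied to $\mathbf{U}\succeq\mathbf{E}^T\mathbf{J}_{\bm\delta}^{-1}\mathbf{E}$ with $\mathbf{E}=[\mathbf{I}_2,\mathbf{0}]^T$, using the fact that $\mathbf{A}^{-1}$ is the top-left block of $\mathbf{J}_{\bm\delta}^{-1}$), producing a $6\times6$ LMI. What each buys: your version is a bona fide SDP with a linear objective at the cost of a larger LMI block; the paper's keeps the smaller LMI and a nonlinear but still convex objective $\mathrm{Tr}(\bm\Xi^{-1})$, which the downstream solver handles anyway. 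Two things worth noting. First, your attention to the regularity condition is well placed and goes beyond the paper: both reformulations implicitly require $\mathbf{J}_{\tilde{\bm\mu}\tilde{\bm\mu}}\succ\mathbf{0}$ (i.e., $\mathbf{b}_\mathrm{t}^H\mathbf{R}_{\mathbf{X}}\mathbf{b}_\mathrm{t}>0$, nonzero power illuminating the target) and $\mathbf{A}\succ\mathbf{0}$ for the Schur complement and the tightness argument to apply; the paper does not verify this. Second, since the proposition asserts equivalence to the specific problem \eqref{CRBconstraint}, a complete write-up along your lines would still need one sentence connecting your $\mathrm{Tr}(\mathbf{U})$ formulation to the $\mathrm{Tr}(\bm\Xi^{-1})$ formulation (or simply adopt the paper's placement of the slack variable from the start).
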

\vspace{-0.4em}
\begin{subequations}\label{CRBconstraint} 
  \begin{align}
    \mathop{\mathrm{minimize}}\limits_{\mathbf{W},\,\bm\Xi}\qquad   &\mathrm{Tr}(\bm\Xi ^{-1})   \tag{\ref{CRBconstraint}}\\
    \mathrm{subject \;to} \qquad  &\begin{bmatrix} \label{CRBconstraint_shurCon}
                                      \mathbf{J}_{\bm\phi\bm\phi}-\bm\Xi & \mathbf{J}_{\bm\phi\tilde{\bm\mu}}\\
                                      \mathbf{J}_{\bm\phi\tilde{\bm\mu}}^T & \mathbf{J}_{\tilde{\bm\mu}\tilde{\bm\mu}} 
                                        \end{bmatrix}\succeq  \mathbf{0},\\
                                      &\bm\Xi\succeq \mathbf{0},\label{CRBconstraint_AuxSemi}\\
                                      &\eqref{CRBopti_powCon},\,\eqref{CRBopti_SINRCon},\,\eqref{CRBopti_eeCon},
  \end{align}
\end{subequations}
where $ \bm\Xi $ is an auxiliary optimization variable matrix, and \mbox{$\mathbf{W}\!\triangleq\!\mathbf{T}_{\mathrm{A}}\mathbf{T}_{\mathrm{D}} $}.
\begin{proof}
  Please refer to Appendix \ref{propositionProof} for the proof.
\end{proof} 
\par
In the above proposition, we have introduced $ \mathbf{W} $ as the equivalent fully-digital beamformer. To proceed, we define $\mathbf{W}_k\!=\!\mathbf{w}_k\mathbf{w}_k^H\!\succeq\!\mathbf{0}$ with $ \mathrm{rank}(\mathbf{W}_k)\!=\!1, \forall k $, where $ \mathbf{w}_{k} $ is the $ k $-th column of $ \mathbf{W} $ and signifies the beamfocusing vector for the $ k $-th user. The problem \eqref{CRBconstraint} is recast as
\vspace{-0.2em}
\begin{subequations}\label{CRBequ} 
  \begin{align}
    \mathop{\mathrm{minimize}}\limits_{\{\mathbf{W}_k\}_{k=1}^K,\bm\Xi}\quad   &\mathrm{Tr}(\bm\Xi ^{-1})   \tag{\ref{CRBequ}}\\
    \mathrm{subject \;to} \quad  &\begin{bmatrix} \label{CRBequ_shurCon}
                          \mathbf{J}_{\bm\phi\bm\phi}-\bm\Xi & \mathbf{J}_{\bm\phi\tilde{\bm\mu}}\\
                          \mathbf{J}_{\bm\phi\tilde{\bm\mu}}^T & \mathbf{J}_{\tilde{\bm\mu}\tilde     {\bm\mu}}
                          \end{bmatrix}\succeq \mathbf{0},\\ 
                        &\bm\Xi\succeq\mathbf{0},\label{CRBequ_AuxSemi}\\
                        &\sum\limits_{k=1}^K\mathrm{Tr}(\mathbf{W}_k)\leq P,\label{CRBequ_powCon}\\
                        &\mathrm{Tr}(\mathbf{H}_k\!\mathbf{W}_k\!)\!\geq\!\Gamma_\mathrm{th}\!\left(\!\sum\limits_{i \neq k}^K\!\mathrm{Tr}(\mathbf{H}_k\mathbf{W}_i\!)\!+\!\sigma_k^2\!\right)\!,\label{CRBequ_SINRCon}\\
                        &\sum\limits_{k=1}^K\!\log_2(1\!+\!\Gamma_k^{\prime})\!-\!\frac{\eta_\mathrm{th}}{\rho}\!\sum\limits_{k=1}^K\!\mathrm{Tr}(\mathbf{W}_{k})\!-\!P_0\eta_\mathrm{th}\!\geq\! 0\label{CRBequ_EErateCon},\\
                        &\mathbf{W}_1,\dots,\mathbf{W}_k\succeq \mathbf{0}, \label{CRBequ_semi}\\
                        &\mathrm{rank}(\mathbf{W}_k)\leq1,\quad k=1,\dots,K, \label{CRBequ_rankCon}
  \end{align}
\end{subequations}
where $\mathbf{H}_k \!=\! \mathbf{h}_k\mathbf{h}_k^H$ and $ \Gamma_k'\!=\!\frac{\mathrm{Tr}(\mathbf{H}_k\mathbf{W}_k)}{\sum_{i\neq k}^K\mathrm{Tr}(\mathbf{H}_k\mathbf{W}_i)+\sigma^2_k}$. The primary advantage of \eqref{CRBequ} lies in transforming the original complex optimization problem in \eqref{CRBexpressionTrace} into a positive semidefinite optimization problem \cite{ZqLuoSemidefinite}. Now, we are ready to develop an effective approach to address problem \eqref{CRBequ} in the \mbox{next subsection.}

\vspace{-0.6em}
\subsection{Proposed Algorithm for Solving Problem \eqref{CRBequ}}\label{pointTargetProblemFormulation}
It is evident that constraints \eqref{CRBequ_EErateCon} and \eqref{CRBequ_rankCon} lead to the nonconvexity of problem \eqref{CRBequ}. To handle the nonconvex rank-one constraint in \eqref{CRBequ_rankCon}, the semidefinite relaxation (SDR) technique is typically leveraged \cite{ZqLuoSemidefinite}. However, ensuring that the obtained solution satisfies the rank-one constraint is generally challenging. When a high-rank solution is obtained due to constraint relaxation, various techniques can be employed to approximate a low-rank solution, such as eigenvalue approximation or Gaussian randomization \cite{ZqLuoSemidefinite}. These methods, however, come with different drawbacks. On the one hand, the eigenvalue approximation method may result in a decrease in solution accuracy, particularly when the objective function involves inversion operations. On the other hand, Gaussian randomization, especially for complex optimization problems, can produce significantly high computational complexity. As a remedy, we first equivalently convert the nonconvex rank-one constraint into a new expression that facilitates algorithm development in the sequel, as illustrated in the lemma below.
\vspace{-0.4em}
\begin{lemma}
  For any positive semidefinite matrix $ \mathbf{X} \!\in\! \mathbb{C}^{M_{\mathrm{t}} \times M_{\mathrm{t}}} $, the rank-one constraint for $\mathbf{X} $ is equivalent to
\end{lemma}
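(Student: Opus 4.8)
The plan is to prove this by invoking the spectral decomposition of $\mathbf{X}$. Since $\mathbf{X}\succeq\mathbf{0}$, write $\mathbf{X}=\sum_{i=1}^{M_\mathrm{t}}\lambda_i\mathbf{u}_i\mathbf{u}_i^H$, where the eigenvalues are real and ordered as $\lambda_1\geq\lambda_2\geq\cdots\geq\lambda_{M_\mathrm{t}}\geq 0$ and $\{\mathbf{u}_i\}$ are orthonormal. First I would note that $\mathrm{Tr}(\mathbf{X})=\sum_{i=1}^{M_\mathrm{t}}\lambda_i$ and $\lambda_\mathrm{max}(\mathbf{X})=\lambda_1$, so $\mathrm{Tr}(\mathbf{X})-\lambda_\mathrm{max}(\mathbf{X})=\sum_{i=2}^{M_\mathrm{t}}\lambda_i$, which is nonnegative because $\mathbf{X}$ is positive semidefinite. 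Hence the equivalent characterization I expect is $\mathrm{Tr}(\mathbf{X})-\lambda_\mathrm{max}(\mathbf{X})\leq 0$ (equivalently, holding with equality, since the left-hand side is always nonnegative).

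Then I would establish the two directions. If $\mathrm{rank}(\mathbf{X})\leq 1$, at most one eigenvalue is nonzero, so $\lambda_i=0$ for all $i\geq 2$ and $\mathrm{Tr}(\mathbf{X})-\lambda_\mathrm{max}(\mathbf{X})=0$. Conversely, if $\mathrm{Tr}(\mathbf{X})-\lambda_\mathrm{max}(\mathbf{X})\leq 0$, then $\sum_{i=2}^{M_\mathrm{t}}\lambda_i\leq 0$; combined with $\lambda_i\geq 0$ for every $i$, this forces $\lambda_i=0$ for all $i\geq 2$, leaving at most the single eigenvalue $\lambda_1$ nonzero, i.e., $\mathrm{rank}(\mathbf{X})\leq 1$. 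Combining both implications yields the claimed equivalence.

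There is no genuine analytical obstacle in the lemma itself — it follows directly from the spectral theorem and the nonnegativity of the eigenvalues of a positive semidefinite matrix. The substantive point, which I would emphasize immediately after the proof, is why this particular reformulation is the useful one: $\mathrm{Tr}(\mathbf{X})$ is linear and $\lambda_\mathrm{max}(\mathbf{X})$ is convex, so the constraint $\mathrm{Tr}(\mathbf{X})-\lambda_\mathrm{max}(\mathbf{X})\leq 0$ has a difference-of-convex structure amenable to SCA. At a given iterate $\mathbf{X}^{(n)}$, one replaces $\lambda_\mathrm{max}(\mathbf{X})$ by its first-order lower bound $\lambda_\mathrm{max}(\mathbf{X}^{(n)})+\mathrm{Tr}\!\big(\mathbf{u}_1^{(n)}(\mathbf{u}_1^{(n)})^H(\mathbf{X}-\mathbf{X}^{(n)})\big)$, with $\mathbf{u}_1^{(n)}$ a unit-norm leading eigenvector of $\mathbf{X}^{(n)}$, obtaining a convex surrogate that can be incorporated into problem \eqref{CRBequ}. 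This is precisely what sidesteps the accuracy loss of eigenvalue truncation and the complexity of Gaussian randomization discussed above, so I would present the lemma proof first and then flag this DC structure as the motivation for the algorithm that follows.
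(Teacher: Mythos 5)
Your proof is correct and follows essentially the same route as the paper's: the paper writes the condition as $\left\|\mathbf{X}\right\|_*-\left\|\mathbf{X}\right\|_2\leq 0$ and argues via the singular values, which for a positive semidefinite matrix coincide with the eigenvalues, so your trace-minus-largest-eigenvalue formulation is the identical argument in equivalent notation. Your additional remarks on the difference-of-convex structure and the first-order surrogate for $\lambda_\mathrm{max}(\cdot)$ match how the paper actually uses the lemma in the subsequent SCA step.
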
 
\vspace{-0.4em}
\begin{equation}\label{equivalentRankCon}
  \mathrm{rank}(\mathbf{X})\leq1\Leftrightarrow\left\|{\mathbf{X}}\right\|_*-\left\|{\mathbf{X}}\right\|_2\leq0.
\end{equation}
\vspace{-2em}
\begin{proof}
 $ \left\|\mathbf{X}\right\|_*=\sum_{i=1}^{M_{\mathrm{t}}}\sigma_i$, $\left\|\mathbf{X}\right\|_{2}=\mathop{\max}\limits_{i}\{\sigma_{i}\} $, where $ M_{\mathrm{t}} $ is the total number of singular values, and $ \sigma_i $ is the $ i $-th singular value of $ \mathbf{X} $. Then, the inequality $ \left\|\mathbf{X}\right\|_* \geq \left\|\mathbf{X}\right\|_{2} $ holds. Thus, equality holds if and only if $ \mathbf{X} $ is of rank one.
 \vspace{-0.4em}
\end{proof}
\par
Based on the above lemma, although we can transform the rank-one constraint in \eqref{CRBequ_rankCon} into its equivalent form, it remains nonconvex. To overcome this difficulty, we tackle the problem by resorting to a penalty-based method \cite{XYuRobust}. Based on Proposition \ref{proposition2}, the equivalent constraint is augmented as a penalty term to the objective function, transforming the problem into
\vspace{-0.4em}
\begin{subequations}\label{CRBPointSPenalty} 
  \begin{align}
    \mathop{\mathrm{minimize}}\limits_{\{\mathbf{W}_k\}_{k=1}^K,\bm\Xi}\quad   &\mathrm{Tr}(\bm\Xi ^{-1})+\frac{1}{\gamma}\sum^K\limits_{k=1}\left(\left\|{\mathbf{W}_k}\right\|_*-\left\|{\mathbf{W}_k}\right\|_2\right)   \tag{\ref{CRBPointSPenalty}}\\
    \mathrm{subject \;to} \quad  &\eqref{CRBequ_shurCon}-\eqref{CRBequ_semi},
  \end{align}
\end{subequations}
\noindent where $ \gamma \ll 1 $ serves as a penalty factor penalizing the violation of the equivalent constraint.  
\vspace{-0.2em}
\begin{proposition}\label{proposition2}
  For the penalty factor $ \gamma $, let $ \{\mathbf{W}_k^{\prime}\}_{k=1}^K $ denote the optimal solutions of problem \eqref{CRBPointSPenalty}. As $ \gamma $ gets sufficiently small, i.e. $ \gamma \rightarrow 0 $, the optimal solution to problem \eqref{CRBequ} is attained at the limit point of sequence $\left\{\{\mathbf{W}_k^{(n)}\}_{k=1}^K\right\}_{n\in \mathbb{N}}$.
  \vspace{-0.2em}
\end{proposition}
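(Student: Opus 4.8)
### Proof Proposal for Proposition \ref{proposition2}

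The plan is to establish this as a standard convergence result for the exact penalty / quadratic penalty method applied to the rank-constrained semidefinite program. The essential observation is that the penalty term $\sum_{k=1}^K(\|\mathbf{W}_k\|_*-\|\mathbf{W}_k\|_2)$ is, by Lemma 1, a continuous function of the $\mathbf{W}_k$ that is nonnegative on the feasible set of \eqref{CRBPointSPenalty} and vanishes exactly when every $\mathbf{W}_k$ is rank-one (or zero), i.e. exactly on the feasible set of the target problem \eqref{CRBequ}. I would first argue that problem \eqref{CRBequ} is feasible (so the comparison is nontrivial) and that, for each fixed $\gamma>0$, problem \eqref{CRBPointSPenalty} admits an optimal solution $\{\mathbf{W}_k'\}$; the latter follows because the remaining constraints \eqref{CRBequ_shurCon}--\eqref{CRBequ_semi} define a compact set once the power constraint \eqref{CRBequ_powCon} is imposed, and the objective is lower semicontinuous on this set (with $\mathrm{Tr}(\bm\Xi^{-1})$ handled by noting the Schur constraint forces $\bm\Xi$ to stay in a bounded region where the inverse is well defined on the relevant face).

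Next I would carry out the monotonicity-and-limit argument. Let $\{\gamma_n\}_{n\in\mathbb{N}}$ be any positive sequence decreasing to $0$, and let $\{\mathbf{W}_k^{(n)}\}$ denote a corresponding optimal solution of \eqref{CRBPointSPenalty} with $\gamma=\gamma_n$. Denoting by $f(\cdot)$ the CRB objective $\mathrm{Tr}(\bm\Xi^{-1})$ and by $p(\cdot)\ge 0$ the penalty sum, standard penalty-method bookkeeping gives: (i) $f(\mathbf{W}^{(n)})+\tfrac{1}{\gamma_n}p(\mathbf{W}^{(n)}) \le f(\mathbf{W}^{(n+1)})+\tfrac{1}{\gamma_n}p(\mathbf{W}^{(n+1)})$, (ii) combining the optimality inequalities at $\gamma_n$ and $\gamma_{n+1}$ shows $p(\mathbf{W}^{(n)})$ is nonincreasing in $n$ and $f(\mathbf{W}^{(n)})$ is nondecreasing, and (iii) comparing against any fixed feasible point $\{\mathbf{W}_k^\star\}$ of \eqref{CRBequ} (for which $p=0$) yields $f(\mathbf{W}^{(n)}) + \tfrac{1}{\gamma_n}p(\mathbf{W}^{(n)}) \le f(\mathbf{W}^\star)$, hence $p(\mathbf{W}^{(n)}) \le \gamma_n\big(f(\mathbf{W}^\star)-f(\mathbf{W}^{(n)})\big) \le \gamma_n\, f(\mathbf{W}^\star) \to 0$. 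By compactness of the constraint set \eqref{CRBequ_shurCon}--\eqref{CRBequ_semi}, the sequence $\{\mathbf{W}_k^{(n)}\}$ has a convergent subsequence with limit point $\{\bar{\mathbf{W}}_k\}$; continuity of $p$ forces $p(\bar{\mathbf{W}})=0$, so by Lemma 1 each $\bar{\mathbf{W}}_k$ has rank at most one, i.e. $\{\bar{\mathbf{W}}_k\}$ is feasible for \eqref{CRBequ}. Finally, lower semicontinuity of $f$ together with $f(\mathbf{W}^{(n)})\le f(\mathbf{W}^\star)$ gives $f(\bar{\mathbf{W}}) \le \liminf f(\mathbf{W}^{(n)}) \le f(\mathbf{W}^\star)$ for every feasible $\{\mathbf{W}_k^\star\}$ of \eqref{CRBequ}, so $\{\bar{\mathbf{W}}_k\}$ is in fact optimal for \eqref{CRBequ}.

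I expect the main obstacle to be the analytic handling of the objective $\mathrm{Tr}(\bm\Xi^{-1})$ rather than the penalty mechanics: one must be careful that $\bm\Xi$ cannot drift to a singular matrix along the optimizing sequence (otherwise $f$ is not continuous and the $\liminf$ argument needs care), and that the auxiliary variable stays in a compact region. This can be controlled by observing that the Schur-complement constraint \eqref{CRBequ_shurCon} together with the power budget bounds $\bm\Xi$ above by $\mathbf{J}_{\bm\phi\bm\phi}$ (which is itself bounded) and that any sequence with $\bm\Xi$ approaching the boundary of the PSD cone drives $\mathrm{Tr}(\bm\Xi^{-1})\to\infty$, contradicting the uniform bound $f(\mathbf{W}^{(n)})\le f(\mathbf{W}^\star)<\infty$; hence the optimizing sequence stays in a compact subset of the positive-definite cone on which $f$ is continuous. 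A secondary, more pedestrian point is verifying feasibility of \eqref{CRBequ}, which is inherited from the feasibility assumption on the original design problem \eqref{CRBopti} under the rank-one structure $\mathbf{W}_k=\mathbf{w}_k\mathbf{w}_k^H$; I would state this as a standing assumption. With these two points dispatched, the remainder is the routine penalty-convergence chain sketched above.
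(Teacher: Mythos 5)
Your proposal is correct and follows exactly the standard penalty-method convergence argument that the paper itself does not spell out but simply defers to by citing \cite[Appendix C]{XYuRobust}: nonnegativity and exactness of the penalty $\|\mathbf{W}_k\|_*-\|\mathbf{W}_k\|_2$ on the rank-one set, the monotonicity bookkeeping in $\gamma_n$, the bound $p(\mathbf{W}^{(n)})\le\gamma_n f(\mathbf{W}^\star)\to 0$, and extraction of a feasible, optimal limit point by compactness. Your additional care about keeping $\bm\Xi$ away from the boundary of the PSD cone (via the blow-up of $\mathrm{Tr}(\bm\Xi^{-1})$) and about assuming feasibility of \eqref{CRBequ} addresses the only wrinkles specific to this problem, so the argument is complete.
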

\begin{proof}
  See \cite[Appendix C]{XYuRobust}.
  \vspace{-0.6em}
\end{proof}
\par
Therefore, \eqref{CRBequ} and \eqref{CRBPointSPenalty} are equivalent, i.e., share an identical optimal solution. Indeed, the objective function of \eqref{CRBPointSPenalty} takes the form of differences of convex functions. Hence, we employ the iterative SCA algorithm to solve it. In particular, we aim to establish a lower bound for $ \left\|\mathbf{W}_k\right\|_2 $ by employing its first-order Taylor series expansion as a surrogate at each iteration. Specifically, the lower bound expression for $ \left\|\mathbf{W}_k\right\|_2 $ at the $ n $-th iteration of the SCA is expressed as
\vspace{-0.1em}
\begin{align}
    \left\|\mathbf{W}_k\right\|_2 & \!\geq\!\left\|\mathbf{W}_k^{(n-1)}\right\|_2\!+\!\mathrm{Tr}\!\left\{\!\mathbf{u}_{\max}\!\left(\mathbf{W}_k^{(n-1)}\right)\!\mathbf{u}_{\max}^{H}\!\left(\mathbf{W}_k^{(n-1)}\right)\right.\notag\\
    &\left.\times\left(\mathbf{W}_k-\mathbf{W}_k^{(n-1)}\right)\right\} = \widetilde{\mathbf{W}}_k^{(n)},
\end{align}
\noindent where $ \mathbf{u}_{\max}(\mathbf{W}_k^{(n-1)}) $ denotes the eigenvector corresponding to the largest singular value of the matrix $ \mathbf{W}_k $ at the $ (n-1) $-th iteration. Concurrently, we denote the lower bound expression as $ \widetilde{\mathbf{W}}_k^{(n)} $. Subsequently, problem \eqref{CRBPointSPenalty} at the $ n $-th iteration of SCA is reformulated as
\vspace{-0.3em}
\begin{subequations}\label{CRBPointSPenaltySCA} 
  \begin{align}
    \mathop{\mathrm{minimize}}\limits_{\{\mathbf{W}_k\}_{k=1}^K,\bm\Xi}\quad   &\mathrm{Tr}(\bm\Xi ^{-1})\!+\!\frac{1}{\gamma}\!\sum^K\limits_{k=1}\!\bigg(\!\left\|{\mathbf{W}_k}\right\|_*\!-\! \widetilde{\mathbf{W}}_k^{(n)}\!\bigg)   \tag{\ref{CRBPointSPenaltySCA}}\\
    \mathrm{subject \;to} \quad   &\eqref{CRBequ_shurCon}-\eqref{CRBequ_semi}.
  \end{align}
\end{subequations}
Moreover, the concavity of constraint on \eqref{CRBequ_EErateCon} primarily arises from the $\log_2(\cdot)$ function. Thus, we focus on the left-hand side of \eqref{CRBequ_EErateCon}, and express $ \log_2(1+\Gamma_k^{\prime})$ as
\vspace{-0.4em}
\begin{equation}\label{rateEqual}
  \log_2\!\left(\!\sum\limits_{i=1}^K\!\mathrm{Tr}(\mathbf{H}_k\!\mathbf{W}_i\!)\!+\!\sigma_k^2\!\right)\!-\!\log_2\!\left(\!\sum\limits_{i\neq k}^K\!\mathrm{Tr}(\mathbf{H}_k\!\mathbf{W}_i\!)\!+\!\sigma_k^2\!\right)\!,
  \vspace{-0.4em}
\end{equation}
where the function $\log_2(\cdot)$ is a concave increasing function with respect to variables $ \{\mathbf{W}_k\}_{k=1}^K $, but the second term in \eqref{rateEqual} renders this constraint nonconvex. Therefore, we utilize the first-order Taylor expansion for the second term with respect to $ \mathbf{W}_i $, which establishes a lower bound of $\log_2(1+\Gamma_k')$ in the $ n $-th iteration of SCA as follows:
\vspace{-0.4em}
\begin{align}\label{rateLowerBound}
  &\log_2(1+\Gamma_k')\ge\log_2\left(\sum\limits_{i=1}^{K}\mathrm{Tr}(\mathbf{H}_k\mathbf{W}_i)+\sigma_k^2\right)\notag\\
  &\!-\!\left\{\!w_k^{(n\!-\!1)}\!+\!\frac{\log_2e}{2^{w_k^{(n\!-\!1)}}}\!\sum_{i\neq k}^{K}\!\mathrm{Tr}\!\left(\!\mathbf{H}_k(\mathbf{W}_i\!-\!\mathbf{W}_i^{(\!n-1\!)})\!\right)\!\right\}\!=\!R_k^{\prime},
\end{align}
\vspace{-0.4em}
\par
\noindent where $w_k^{(\!n-1\!)}\!=\!\log_{2}(\sum_{i\neq k}^{K}\!\mathrm{Tr}(\mathbf{H}_k\mathbf{W}_i^{(\!n-1\!)})\!+\!\sigma_{k}^{2}),\,\{\mathbf{W}_i^{\!(n-1\!)}\}\!_{i=1}^K $ denote the solutions obtained in the $ (n\!-\!1) $-th iteration. Then, \eqref{rateLowerBound} is adopted as a surrogate constraint in \eqref{CRBequ_EErateCon}. The problem \eqref{CRBPointSPenaltySCA} at the $ n $-th iteration of SCA is reformulated as
\vspace{-0.4em} 
\begin{subequations}\label{CRBPointSPenaltySCARk} 
  \begin{align}
    \mathop{\mathrm{minimize}}\limits_{\{\mathbf{W}_k\}_{k=1}^K,\bm\Xi}\quad   &\mathrm{Tr}(\bm\Xi ^{-1})\!+\!\frac{1}{\gamma}\!\sum^K\limits_{k=1}\!\bigg(\!\left\|{\mathbf{W}_k}\right\|_*\!-\! \widetilde{\mathbf{W}}_k^{(n)}\!\bigg)   \tag{\ref{CRBPointSPenaltySCARk}}\\
    \mathrm{subject \;to} \quad  & \sum\limits_{k=1}^K\!R_k^{\prime}\!-\!\frac{\eta_\mathrm{th}}{\rho}\!\sum\limits_{k=1}^K\!\mathrm{Tr}(\mathbf{W}_{k})\!-\!P_0\eta_\mathrm{th}\!\geq\! 0\label{CRBPointSPenaltySCARk_EErateCon},\\
    &\eqref{CRBequ_shurCon}-\eqref{CRBequ_SINRCon},\,\eqref{CRBequ_semi},
  \end{align}
\end{subequations}

\noindent which is jointly convex with respect to $\{\mathbf{W}_k\}_{k=1}^{K}$ and $\bm\Xi$. Consequently, standard convex program solvers such as CVX can be applied to acquire an optimal solution for problem \eqref{CRBPointSPenaltySCARk}. The complete procedure of the developed penalty-based SCA algorithm is outlined in \textbf{Algorithm \ref{algorithmPenaltySCAPoint}}.\par
\par
\begin{algorithm}[t]
  \caption{Penalty-Based SCA Algorithm for Problem \eqref{CRBPointSPenaltySCARk}}
  \label{algorithmPenaltySCAPoint}
  \begin{algorithmic}[1]
    \renewcommand{\algorithmicrequire}{\textbf{Input:}}
    \renewcommand{\algorithmicensure}{\textbf{Output:}}
    \vspace{-0.2em}
  \STATE \textbf{Initialize} a feasible point $ \{\mathbf{W}^{(0)}_k\}^{K}_{k=1},\,\bm\Xi^{(0)}$. Set the iteration index $ n=0 $, the number of maximum iterations $ N_\mathrm{max} $, the convergence tolerance $ 0 < \varepsilon \ll 1 $, and the penalty factor $ 0 < \gamma \ll 1 $. Calculate the initial objective value $ \mathrm{Obj}^{(0)} $.
  \REPEAT
    \STATE Update $ n=n+1 $;
    \STATE Solve the problem \eqref{CRBPointSPenaltySCARk} with $ \{\mathbf{W}^{(n-1)}_k\}^{K}_{k=1}$, $\bm\Xi^{(n-1)}$;
    \STATE Output the optimal solutions $ \{\mathbf{W}^{(n)}_k\}^{K}_{k=1}$, $\bm\Xi^{(n)}$ and the objective value $ \mathrm{Obj}^{(n)} $; 
  \UNTIL convergence. 
  \vspace{-0.2em}
  \end{algorithmic}
\end{algorithm}
\vspace{-0.4em}

\subsection{Computational Complexity and Convergence Analysis}\label{pointTargetComplexityConvergence}
Essentially, the computational complexity of Algorithm \ref{algorithmPenaltySCAPoint} hinges on solving problem \eqref{CRBPointSPenaltySCA}. Since the problem is a semidefinite programming (SDP) problem, it can typically be solved by exploiting the interior point method. Consequently, the computational complexity of the penalty-based SCA algorithm is $\mathcal{O}(\ln(\frac{1}{\epsilon})NN_\mathrm{t}^{6.5}{K}^{3.5}) $ \cite{KYWangOutage}, where $ \epsilon > 0 $ denotes the convergence tolerance of the interior point algorithm and $ N $ represents the required number of iterations for SCA. \par
Regarding the convergence of the iterative algorithm, the minimum value obtained from problem \eqref{CRBPointSPenaltySCA} serves as an upper bound of the original problem \eqref{CRBequ}, and the upper bound progressively tightens monotonically after $ n $ iterations of \eqref{CRBPointSPenaltySCA}. With reference to \cite{ABeckSequential}, optimal values achieved by the sequence $\left\{\{\mathbf{W}_k^{(n)}\}_{k=1}^K\right\}_{n\in \mathbb{N}}$ constitute a non-increasing sequence that converges to a Karush-Kuhn-Tucker (KKT) point of problem \eqref{CRBequ} in polynomial time.

\vspace{0.6em}
\section{Beamfocusing Design for Extended Target}
In this section, since the rank deficiency of the covariance matrix, the attainment of CRB for TRM estimation of an extended target in hybrid architectures is prevented, as proved in Appendix \ref{FIMExtendedDeriveAppendix}. As a compromise, we resort to the Bayesian technique to analyze the BCRB \cite{JTabrikianTheoretical} of TRM estimation for an extended target. Once TRM can be accurately estimated, the classical signal processing algorithms are applied to extract information from scatterers. If the number of scatterers $ N\leq\mathrm{rank}(\mathbf{B})$, the scatterers' paths are distinguishable, and the corresponding distances and angles can be extracted from the estimation of $ \mathbf{B} $ \cite{YDHuangNearField}. Conversely, if $ N\geq\mathrm{rank}(\mathbf{B}) $, differentiating all paths becomes impossible. In this case, an effective strategy is developed to identify and utilize the dominant paths for further information extraction.\par
\vspace{-0.4em}
\subsection{Bayesian CRB Analysis for Near-Field Extended Target}
Based on the above discussion, obtaining the CRB for TRM estimation for an extended target is very challenging, as shown in Appendix \ref{FIMExtendedDeriveAppendix}. The Bayesian estimation techniques are advantageous in the absence of unbiased minimum variance estimators. Therefore, we leverage Bayesian techniques to estimate matrix $ \mathbf{B} $. The formulation is
\vspace{-0.2em}
\begin{equation}\label{BCRBDefinition}
  \mathrm{BCRB}(\mathbf{B})=\mathbf{J}_{\bm{\delta}}^{\prime -1}=\left(\mathbf{J}_{1}+\mathbf{J}_{2}\right)^{-1},
  \vspace{-0.2em}
\end{equation}
where $\mathbf{J}_{1}$ and $\mathbf{J}_{2}$ form the FIM $\mathbf{J}_{\bm{\delta}}^{\prime}$, representing the impact of observed values and prior statistical information on the estimation error bound \cite{JTabrikianTheoretical}, respectively, which are given by
\begin{equation}\label{BCRBDDefPart_1}
  [\mathbf{J}_{1}]_{ij}\!=\!-\mathbb{E}\bigg\{\!\frac{\partial^{2}\ln p(\mathbf{y}|\bm{\delta})}{\partial\bm{\delta}_{i}\partial\bm{\delta}_{j}}\!\bigg\}, [\mathbf{J}_{2}]_{ij}\!=\!-\mathbb{E}\bigg\{\!\frac{\partial^{2}\ln p(\bm\delta)}{\partial\bm\delta_{i}\partial\bm\delta_{j}}\!\bigg\},
\end{equation}
where $p (\mathbf{y}|\bm{\delta})$ is the conditional probability density of the received data vector $ \mathbf{y} $ given the vector $ \bm{\delta} $ to be estimated, and $p(\bm{\delta})$ represents the prior probability distribution of the unknown parameters $ \bm{\delta} $.\par 
\vspace{-0.1em}
As demonstrated in Appendix \ref{FIMExtendedDeriveAppendix}, matrix $\mathbf{J}_{1}$ obtained under the conditional probability density $ p (\mathbf{y}|\bm{\delta}) $ is singular and equals to \eqref{FIMDiviDedMatrixExtendedCRLB}. Furthermore, by incorporating the statistical prior information of the parameters to be estimated, we can derive the BCRB of TRM estimation for an extended target. Based on the central limit theorem, as the number of distributed points increases in the same range bin, it is plausible to assume that statistical information for the response matrix of the extended target follows a circularly symmetric complex Gaussian (CSCG) distribution \cite{ZJWuExtendedTarget}. Without loss of generality, we define $ \bm\beta\!=\!\mathrm{vec}(\mathbf{B})\!\in\!\mathbb{C}^{N_\mathrm{t}N_\mathrm{r}} $ and the statistical information of $ \bm\beta $ is denoted as $ \bm\beta\!\sim\!\mathcal{CN}(\mathbf{0},\,\mathbf{R}_{\bm\beta\bm\beta}) $ with $ \mathbf{R}_{\bm\beta\bm\beta}\!=\!\sigma_{\bm\beta}^2\mathbf{I}_{N_\mathrm{t}N_\mathrm{r}} $. Subsequently, the statistical prior distribution of the vector $ \bm{\delta} $ is given by $ \bm{\delta}\!\sim\!\mathcal{CN}(\mathbf{0},\,\mathbf{R}_{\bm{\delta}}) $, where $ \mathbf{R}_{\bm{\delta}}\!=\!\frac{\sigma_{\bm\beta}^2}{2}\mathbf{I}_{2N_\mathrm{t}N_\mathrm{r}} $. The matrix $\mathbf{J}_{2}$, which is the inversion of prior covariance matrix of the unknown parameters vector $ \bm{\delta} $, is defined as $\mathbf{J}_{2}\!=\!\mathbf{R}_{\bm{\delta}}^{-1}\!=\!\frac{2}{\sigma_{\bm\beta}^2}\mathbf{I}_{2N_\mathrm{t}N_\mathrm{r}}$. The block matrices expression of Bayesian FIM $ \mathbf{J}_{\bm\delta}^{\prime}$ is formulated as 
\vspace{-0.4em}
\begin{equation}\label{BCRB}
  \mathbf{J}_{\bm\delta}^{\prime}=
\begin{bmatrix}
  \mathbf{J}_{\bm\beta_\mathrm{R}\bm\beta_\mathrm{R}}^{\prime}& \mathbf{J}_{\bm\beta_\mathrm{R}\bm\beta_\mathrm{I}}^{\prime}\\
  \mathbf{J}_{\bm\beta_\mathrm{I}\bm\beta_\mathrm{R}}^{\prime}&\mathbf{J}_{\bm\beta_\mathrm{I}\bm\beta_\mathrm{I}}^{\prime}
\end{bmatrix}.
\vspace{-0.2em}
\end{equation}
Then, the block matrices of FIM $\mathbf{J}_{\bm{\delta}}^{\prime}$ are expressed as
\vspace{-0.4em}
\begin{align}\label{FIMDiviDedMatrixExtendedElementBCRB}
  &\mathbf{J}_{\bm\beta_\mathrm{R}\bm\beta_\mathrm{R}}^{\prime}\!=\!\mathbf{J}_{\bm\beta_\mathrm{I}\bm\beta_\mathrm{I}}^{\prime}\!=\!\frac{2L}{\sigma_{n}^2}\!\mathrm{Re}\!\left\{\mathbf{R}_\mathrm{\mathbf{X}}^T\!\otimes\!\mathbf{I}_{N_\mathrm{r}}\!+\!\frac{\sigma_{n}^{2}}{\sigma_{\bm\beta}^2L}\mathbf{I}_{N_\mathrm{t}N_\mathrm{r}}\right\},\\
  &\mathbf{J}_{\bm\beta_\mathrm{I}\bm\beta_\mathrm{R}}^{\prime}\!=\!-\mathbf{J}_{\bm\beta_\mathrm{R}\bm\beta_\mathrm{I}}^{\prime}\!=\!\frac{2L}{\sigma_{n}^2}\!\mathrm{Im}\!\left\{\mathbf{R}_\mathrm{\mathbf{X}}^T\!\otimes\!\mathbf{I}_{N_\mathrm{r}}\!+\!\frac{\sigma_{n}^{2}}{\sigma_{\bm\beta}^2L}\mathbf{I}_{N_\mathrm{t}N_\mathrm{r}}\right\}.
\end{align}
Applying Lemma \ref{lemmaComplexInverse}, the trace of the BCRB for the TRM estimation of an extended target is given by
\begin{align}\label{BCRBpriorPart}
   \mathrm{Tr}\{\mathrm{BCRB(\mathbf{B})}\}&=\mathrm{Tr}(\mathbf{J}_{\bm{\delta}}^{\prime -1})\notag\\
   &=\frac{\sigma_{n}^2}{L}\mathrm{Tr}\left\{\left[\left(\mathbf{R}_\mathrm{\mathbf{X}}\!+\!\frac{\sigma_{n}^2}{\sigma_{\bm\beta}^2L}\mathbf{I}_{N_\mathrm{t}}\right)\!\otimes\!\mathbf{I}_{N_\mathrm{r}}\right]^{-1}\right\}\notag\\
   &=\frac{\sigma_{n}^2N_\mathrm{r}}{L}\mathrm{Tr}\left\{\left(\mathbf{R}_\mathrm{\mathbf{X}}\!+\!\frac{\sigma_{n}^2}{\sigma_{\bm\beta}^2L}\mathbf{I}_{N_\mathrm{t}}\right)^{-1}\right\}\!.
\end{align}
Compared to the CRB for a point target, the BCRB of TRM for an extended target can be expressed by a simpler closed-form expression in \eqref{BCRBpriorPart}. In fact, the BCRB of TRM leverages the distribution characteristics and prior information of the extended target, allowing the prior information to play a significant role in the estimation process. 
\begin{remark}
  \vspace{-0.2em}
  The BCRB is achievable by exploiting a linear minimum mean squared error (LMMSE) estimator.
  \vspace{-0.2em}
\end{remark} 
The LMMSE estimation vector $ \hat{\bm\beta} $ is expressed as
\vspace{-0.1em}
\begin{equation}
    \hat{\bm\beta}\!=\!\mathbf{R}_{\bm\beta\bm\beta}\!\left[L\!\left(\mathbf{R}^T_\mathrm{X}\!\otimes\!\mathbf{I}_{N_\mathrm{r}}\!\right)\!\mathbf{R}_{\bm\beta\bm\beta}\!+\!\sigma_{n}^2\mathbf{I}_{N_\mathrm{t}N_\mathrm{r}}\!\right]\!^{-1}\!\left(\mathbf{X}\!^T\!\otimes\!\mathbf{I}_{N_\mathrm{r}}\!\right)\!\mathbf{y}.
    \vspace{-0.1em}
\end{equation}
The reason behind this is that the LMMSE estimation of vector $ \mathbf{b} $ is a linear estimation problem in the presence of the prior knowledge that vector $ \mathbf{b} $ follows a CSCG distribution.


\vspace{-0.8em}
\subsection{Problem Formulation}
Following a point target optimization problem formulation in Sec. \ref{pointTargetDesignSection}, in an extended target scenario, the optimization objective is to minimize the BCRB while ensuring the power consumption and EE of the system, expressed as follows
\vspace{-0.2em}
\begin{subequations}\label{BCRBExtendeq} 
  \begin{align}
    \mathop{\mathrm{minimize}}\limits_{\{\mathbf{W}_k\}_{k=1}^K}\quad  &\mathrm{Tr}\{\mathrm{BCRB}(\mathbf{B})\}  \tag{\ref{BCRBExtendeq}}\\
    \mathrm{subject \;to} \quad    
                    &\sum\limits_{k=1}^K\mathrm{Tr}(\mathbf{W}_k)\leq P,\label{BCRBExtendeq_powCon}\\
                    &\mathrm{Tr}(\mathbf{H}_k\!\mathbf{W}_k\!)\!\geq\!\Gamma_\mathrm{th}\!\left(\!\sum\limits_{i \neq k}^K\!\mathrm{Tr}(\mathbf{H}_k\!\mathbf{W}_i\!)\!+\!\sigma_k^2\!\right)\!,\label{BCRBExtendeq_SINRCon}\\
                    &\sum\limits_{k=1}^K\!\log_2\!\left(1\!+\!\Gamma_k' \right)\!-\!\frac{\eta_\mathrm{th}}{\rho}\!\sum\limits_{k=1}^K\!\mathrm{Tr}(\mathbf{W}_{k})\!-\!P_0\eta_\mathrm{th}\!\geq\!0,\label{BCRBExtendeq_EErateCon}\\
                    &\mathbf{W}_1,\dots,\mathbf{W}_k\succeq \mathbf{0}, \label{BCRBExtendeq_semi}\\
                    &\mathrm{rank}(\mathbf{W}_k)\leq1,\quad k=1,\dots,K. \label{BCRBExtendeq_rankCon}
  \end{align}
\end{subequations}
Similar to Sec. \ref{pointTargetProblemFormulation}, the rank-one and \eqref{BCRBExtendeq_EErateCon} nonconvex constraints are also addressed by leveraging the penalty method and SCA techniques. Subsequently, the BCRB optimization problem at the $ n $-th iteration is formulated as 
\vspace{-0.2em}
\begin{subequations}\label{BCRBExtendPenaltySCA} 
  \begin{align}
    \mathop{\mathrm{minimize}}\limits_{\{\mathbf{W}_k\}_{k=1}^K}\quad  &\mathrm{Tr}\{\mathrm{BCRB}(\mathbf{B})\}\!+\!\frac{1}{\gamma}\!\sum^K\limits_{k=1}\!\bigg(\!\left\|{\mathbf{W}_k}\right\|_*\!-\! \widetilde{\mathbf{W}}_k^{(n)}\!\bigg)\!\tag{\ref{BCRBExtendPenaltySCA}}\\
    \mathrm{subject \;to} \quad  &\sum\limits_{k=1}^K R_k'\!-\!\frac{\eta_\mathrm{th}}{\rho}\!\sum\limits_{k=1}^K\!\mathrm{Tr}(\mathbf{W}_{k})\!-\!P_0\eta_\mathrm{th}\!\geq\!0,\\  
    &\eqref{BCRBExtendeq_powCon},\eqref{BCRBExtendeq_SINRCon},\eqref{BCRBExtendeq_semi}.
  \end{align}
\end{subequations}
The computational complexity and convergence analysis of Problem \eqref{BCRBExtendPenaltySCA} are similar to those in Sec. \ref{pointTargetComplexityConvergence}.\par

Based on the proposed optimization algorithm, the framework provides an effective approach for near-field target estimation. Building upon the structure of the parameter vector modeling for the target, the framework presents general applicability. Specifically, for near-field multi-target estimation, it enables the construction of a joint optimization problem by formulating FIM that incorporates the estimation parameters of multiple targets. In mixed near-field and far-field scenarios, the framework is applicable to different propagation models by formulating FIM of mixed target parameters. Moreover, in multi-static sensing systems, the framework incorporates FIMs of the estimation parameters at each sensing node to formulate a cooperative multi-static optimization problem. Therefore, the proposed framework can be effectively extended to multi-target estimation, mixed near-field and far-field targets estimation, and multi-static estimation scenarios.

\section{near-field Target Estimation Hybrid Precoding}
In this section, we aim to optimize both the analog and digital beamformers to minimize the Euclidean distance between the equivalent fully-digital beamformer and the product of analog and digital beamformers. The problem is formulated as follows
\vspace{-0.6em}
\begin{algorithm}
  \renewcommand{\algorithmicrequire}{\textbf{Input:}}
  \renewcommand{\algorithmicensure}{\textbf{Output:}}
  \caption{Partially-Connected Architecture for Hybrid Precoding}
  \label{partiallyConnHyPre}
  \begin{algorithmic}[1]
  \REQUIRE Equivalent fully-digital beamfocusing matrix $\mathbf{W}$, iteration index $ m_{\mathrm{P}}=0 $, convergence tolerance $ 0 < \varepsilon \ll 1 $, the number of maximum iterations $ N_{\mathrm{max}} $.
  \STATE Initialize the point $ \mathbf{T}_\mathrm{A}^{(0)}, \mathbf{T}_\mathrm{D}^{(0)}$. Calculate the initial objective function value $ \mathrm{Obj}^{(0)} $.
  \REPEAT
    \STATE Update $ m_{\mathrm{P}}=m_{\mathrm{P}}+1 $;
    \STATE Update $ \mathbf{T}_\mathrm{D}$ and $ \mathbf{T}_\mathrm{A} $, based on Eqs. \eqref{equPrePartDigSolution}, \eqref{partConAnaPreSolution}, respectively; 
    \STATE Calculate the objective value $ \mathrm{Obj}^{(m_{\mathrm{P}})} $;
  \UNTIL convergence.
  \end{algorithmic}
\end{algorithm} 
\begin{subequations}\label{equivalentPrecoderSolve} 
  \begin{align}
    \mathop{\mathrm{minimize}}\limits_{\mathbf{T}_\mathrm{A},\mathbf{T}_\mathrm{D}}\quad   &\|\mathbf{W}-\mathbf{T}_\mathrm{A}\mathbf{T}_\mathrm{D}\|_\mathrm{F}^{2} \tag{\ref{equivalentPrecoderSolve}}\\
    \mathrm{subject \;to} \quad  &\|\mathbf{T}_\mathrm{A}\mathbf{T}_\mathrm{D}\|_\mathrm{F}^2= P,\label{equivalentPrecoderSolve_PowCon}\\
    &\mathbf{T}_\mathrm{A}\in\mathcal{A}.\label{equivalentPrecoderSolve_UnitModuCon}
    \vspace{-1em}
   \end{align}
\end{subequations}
Due to the unit-modulus constraint on the analog beamformer, problem \eqref{equivalentPrecoderSolve} is nonconvex. As such, analog-and-digital beamformers are optimized iteratively via alternating optimization (AO), and the details are shown as follows.
\vspace{-0.4em}
\subsection{Fully-Connected Architecture}

\subsubsection{Update Digital Precoder $ \mathbf{T}_\mathrm{D}$}
With the given analog beamformer $ \mathbf{T}_\mathrm{A} $, we optimize the digital beamformer $ \mathbf{T}_\mathrm{D} $. 
The problem then becomes a well-known least squares problem \cite{AAroraHybrid}. The solution at the $ m_{\mathrm{F}} $-th iteration of AO is given by
\begin{equation}\label{digitalPrecoderSolution}
  \mathbf{T}_\mathrm{D}=(\mathbf{T}_\mathrm{A}^H\mathbf{T}_\mathrm{A})^{-1}\mathbf{T}_\mathrm{A}^H\mathbf{W}.
  \vspace{-0.3em}
\end{equation}

\noindent To satisfy the power constraint, we normalize $ \mathbf{T}_\mathrm{D} $ by a factor of $ \frac{\sqrt{P}}{\|\mathbf{T}_\mathrm{A}\mathbf{T}_\mathrm{D}\|_\mathrm{F}} $. 
\subsubsection{Update Analog Precoder $\mathbf{T}_\mathrm{A}$}
When the digital beamformer is fixed, we employ the majorization-minimization algorithm to address the analog beamformer optimization problem \cite{AAroraHybrid}, where the solution at the $ m_{\mathrm{F}} $-th iteration \mbox{is given by}
\vspace{-0.2em}
\begin{equation}\label{fullyConnAnaPre} 
  \mathbf{T}_\mathrm{A}^{(m_{\mathrm{F}})}=e^{-j\arg([\mathbf{F}^{(m_{\mathrm{F}}-1)}]^T)},
  \vspace{-0.2em}
\end{equation}
where $ \mathbf{F}^{(m_{\mathrm{F}}-1)}\!=\!\mathbf{T}_\mathrm{D}\mathbf{W}^H\!-\!(\mathbf{M}_\mathrm{D}\!-\!\lambda_\mathrm{max}(\mathbf{M}_\mathrm{D})\mathbf{I})[\mathbf{T}_\mathrm{A}^{(m_{\mathrm{F}}-1)}]^H $, $ \mathbf{M}_\mathrm{D}\!=\!\mathbf{T}_\mathrm{D}\mathbf{T}_\mathrm{D}^H $ and denote its maximum eigenvalue as $\lambda_\mathrm{max}(\mathbf{M}_\mathrm{D})$. 
\vspace{-0.2em}

\subsection{Partially-Connected Architecture}
Compared to the fully-connected architecture, each RF chain connects to $ I_{\mathrm{g}} $ antennas in partially-connected architecture. Due to the block matrix form of $ \mathbf{T}_\mathrm{A} $, the power constraint \eqref{equivalentPrecoderSolve_PowCon} is simplified as $ \|\mathbf{T}_\mathrm{A}\mathbf{T}_\mathrm{D}\|_\mathrm{F}^2\!=\! I_{\mathrm{g}}\|\mathbf{T}_\mathrm{D}\|_\mathrm{F}^2\!=\!P $.
\subsubsection{Update Digital Precoder $\mathbf{T}_\mathrm{D}$}
With the fixed $ \mathbf{T}_\mathrm{A} $, the original problem is converted into a projection problem with a closed-form solution at the $ m_{\mathrm{P}} $-th iteration of AO, given by
\vspace{-0.5em}
\begin{equation}\label{equPrePartDigSolution}
    \mathbf{T}_\mathrm{D}=\sqrt{\frac{P}{I_\mathrm{g}}}\frac{\mathbf{T}_\mathrm{A}^H\mathbf{W}}{\|\mathbf{T}_\mathrm{A}^H\mathbf{W}\|_\mathrm{F}}.
\vspace{-0.2em}
\end{equation}

\subsubsection{Update Analog Precoder $\mathbf{T}_\mathrm{A}$}
With the given $ \mathbf{T}_\mathrm{D} $, the analog beamformer optimization problem is transformed into a row-wise form for each non-zero entry of $ \mathbf{T}_\mathrm{A} $ based on its block matrix structure. At the $ m_{\mathrm{P}} $-th iteration of AO, the problem is reformulated as follows
\vspace{-0.4em}
\begin{subequations}\label{PartConnAnaPre} 
  \begin{align}
    \mathop{\mathrm{minimize}}\limits_{\theta_{p,q}}\quad   &\sum_{p}\sum_{q}\|\mathbf{w}_{p}-e^{j\theta_{p,q}}\mathbf{d}_{q}\|_{2}^{2} \tag{\ref{PartConnAnaPre}}\\
    \mathrm{subject \;to} \quad   
    &\theta_{p,q}\in(0,2\pi], \forall p, \forall q=\left\lceil\frac{p}{I_{\mathrm{g}}}\right\rceil,
   \end{align}
\end{subequations}
where $ \mathbf{w}_p=[\mathbf{W}_{p,:}]^T $, $ \mathbf{d}_q=[(\mathbf{T}_\mathrm{D})_{q,:}]^T $, $(\cdot)_{p,:} $ represents the $ p $-th row for matrix, and $ \theta_{p,q} $ is the phase of the $(p,q)$ non-zero element of $ \mathbf{T}_\mathrm{A} $. Problem \eqref{PartConnAnaPre} is essentially a phase rotation problem. Then, the updated analog beamformer at the $ m_{\mathrm{P}} $-th iteration of AO is shown as
\vspace{-0.2em}
\begin{equation}\label{partConAnaPreSolution}
  [\mathbf{T}_\mathrm{A}]_{p,q}=e^{j\arg(\mathbf{d}_q^H\mathbf{w}_p)}, \forall p, \forall q=\left\lceil\frac{p}{I_{\mathrm{g}}}\right\rceil.  
\end{equation}
The procedures of fully-connected architecture algorithm are similar to that of partially-connected architecture. Here, only the procedures of partially-connected architecture are shown and summarized in \textbf{Algorithm \ref{partiallyConnHyPre}}.

\subsection{Computational Complexity and Convergence Analysis}
For the fully-connected architecture, the computational complexity of each iteration mainly depends on the pseudo-inverse operation for optimizing $\mathbf{T}_\mathrm{D}$, i.e., $ \mathcal{O}(N_\mathrm{t}^3) $. For the partially-connected architecture, the complexity is determined by Eqs. \eqref{equPrePartDigSolution} and \eqref{partConAnaPreSolution}, which are given by $\mathcal{O}(N_\mathrm{t}^\mathrm{RF}N_\mathrm{t}K)$ and $ \mathcal{O}(N_\mathrm{t}KN_\mathrm{t}^\mathrm{RF}) $, respectively. Let $ M_{\mathrm{F}} $ and $  M_{\mathrm{P}}  $ denote the average iteration numbers for the fully and partially-connected architecture, respectively. Then, the overall computational complexity for fully-connected and partially-connected architecture is given by $\mathcal{O}(M_{\mathrm{F}}N_\mathrm{t}^3)$ and $ \mathcal{O}(M_{\mathrm{P}}N_\mathrm{t}^\mathrm{RF}N_\mathrm{t}K) $, respectively. Both algorithms employ the AO algorithm, ensuring ultimate convergence \cite{ABeckOn}.

\section{Numerical Results}
\begin{figure}[t]
  \centering
  \epsfxsize=7.0in
  \includegraphics[scale=0.5]{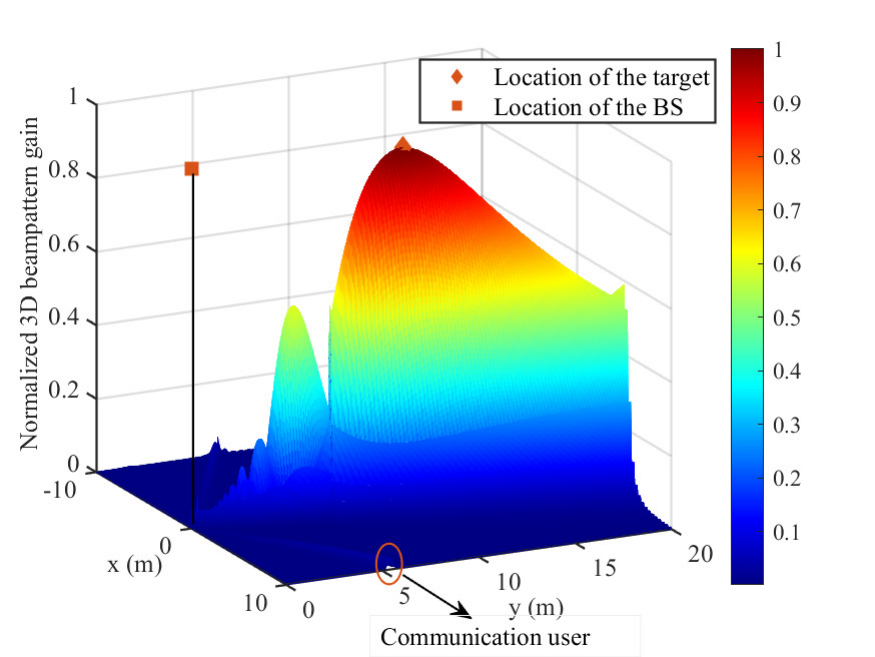}
  \caption{Near-field beamfocusing.}
  \label{figBeamfocusing}
  \vspace{-0.4em}
\end{figure}
\vspace{0.3em}
\begin{figure*}[ht]
  \vspace{-0.6em}
	\captionsetup[subfigure]{justification=centering} 
  \centering
	\subfloat[Distance estimation.]{
    \epsfxsize=7.0in
    \includegraphics[scale=0.5]{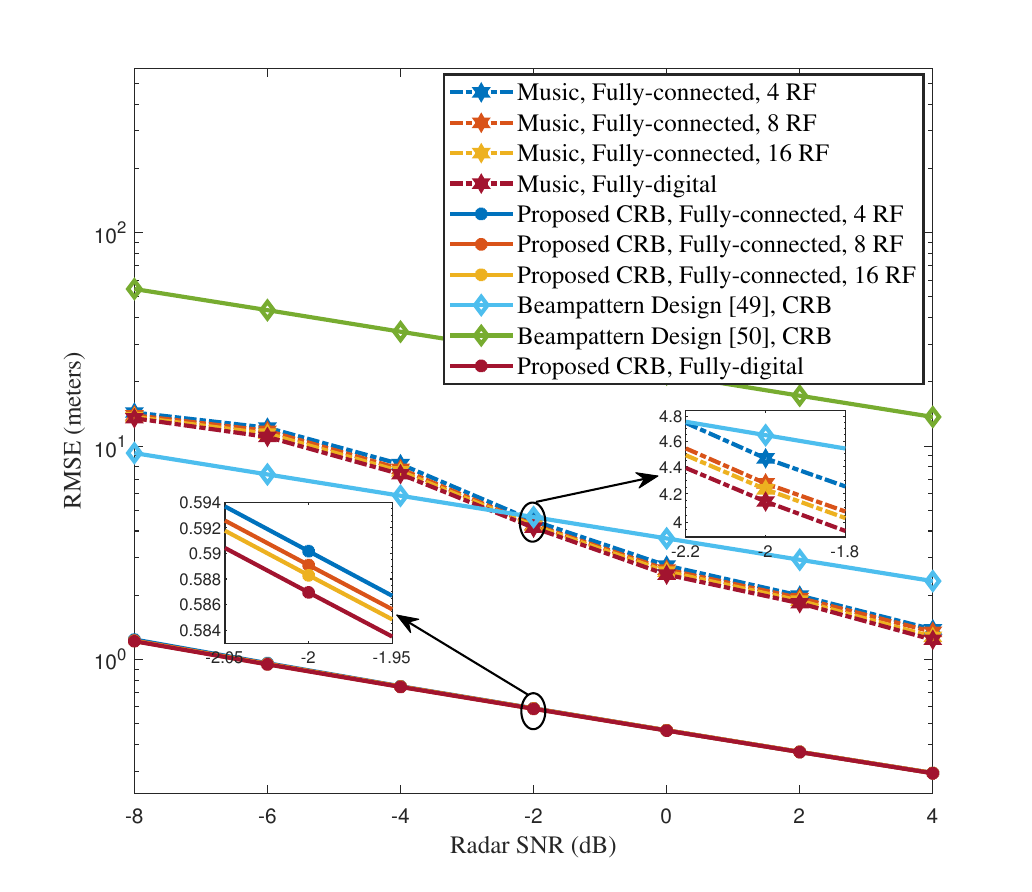}
		\label{figPowCRBFCDis}}
	\hfil
	\subfloat[Angle estimation.]{
    \epsfxsize=7.0in
    \includegraphics[scale=0.5]{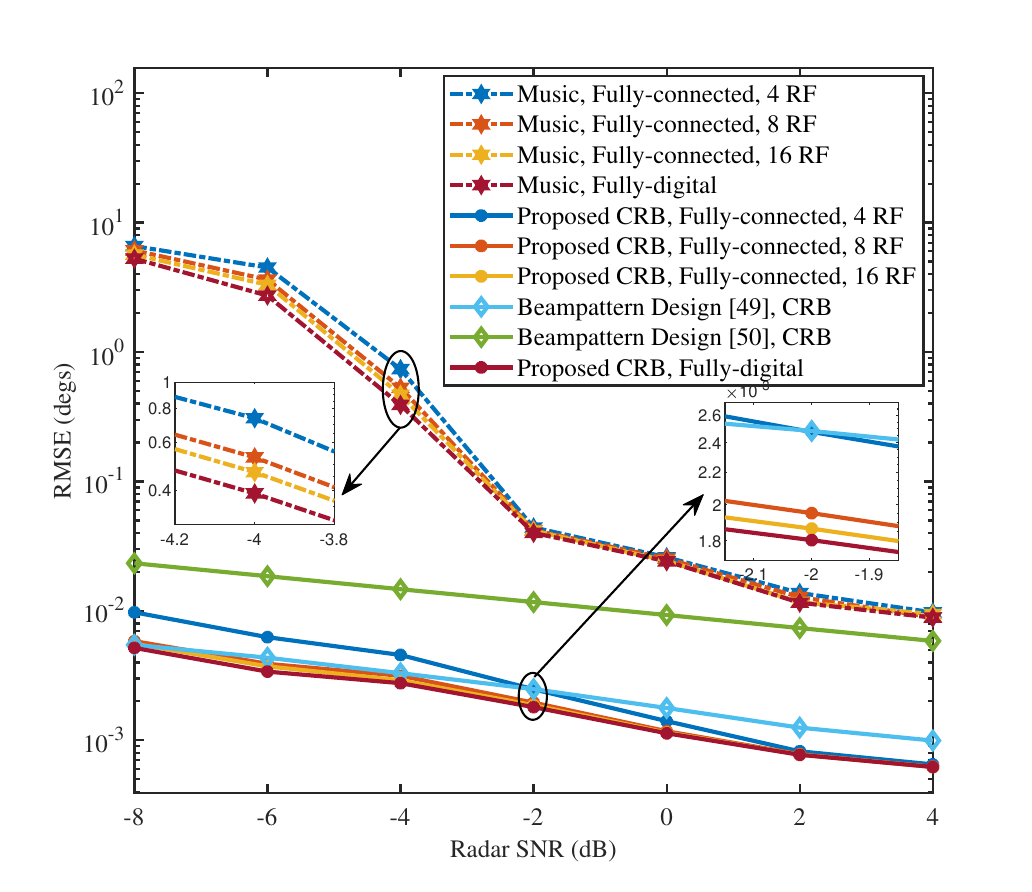}
		\label{figPowCRBFCAngle}}
	\caption{Parameters estimation performance in a point target scenario for fully-connected architecture at different SNRs and number of RF chains.}
	\label{figPowCRBFC}
  \vspace{-1em}
\end{figure*} 
In this section, numerical results are presented to confirm the performance of the proposed near-field hybrid ISAC framework. We assume that the BS is equipped with $N_\mathrm{t}\!=\!N_\mathrm{r}\!=\!64$ antennas, operating at the carrier frequency $ f\!=\!28\,\mathrm{GHz}\,(\lambda=0.0107\,\mathrm{m})$. The Rayleigh distance is calculated as $ \frac{2(N_\mathrm{t}\lambda/2)^2}{\lambda}\!=\!21.92\,\mathrm{m} $. The maximum transmit power and system static circuit power at the BS are set to $ P_{\mathrm{max}}\!=\!34\,\,\mathrm{dBm}, P_0\!=\!15\,\mathrm{dBm}$, respectively. The threshold of system EE is set to $\eta_\mathrm{th}\!=\!4\,\,\mathrm{bps/Hz/J}$. The power amplifier coefficient and the frame length are set to $\rho\!=\!0.5 $ and $ L\!=\!64 $ \cite{ZHeEnergy}. The noise variance of the BS and users are set to $ \sigma_{n}^2\!=\!\sigma_{k}^2\!=\!0\,\,\mathrm{dBm}$. Additionally, we assume that four users are served by the ISAC BS in the near-field region, located at distances of $ [15,10,15,10] $ meters and angles of $[-60^{\circ}, -30^{\circ}, 30^{\circ}, 60^{\circ}]$. The values of $ \beta_L $ and $ \beta_i $ for the $k$-th user can be calculated by adopting the method in \cite{CLinSubarrayBased}. Moreover, for a point target scenario, the target of interest is assumed to be located at the direction of $ 15^{\circ} $ and the distance $ 10\,\mathrm{m} $. For an extended target scenario, the TRM, $ \mathbf{B} $, follows the i.i.d. CSCG distribution with zero mean and variance $ \sigma_\mathrm{\bm\beta}^2=0\,\mathrm{dBm} $. 


\vspace{-0.6em}
\subsection{Feasibility of The Proposed Beamfocusing Scheme}
To better demonstrate the feasibility of the proposed CRB beamfocusing scheme, the number of transmit and receive antennas is set to 120 exclusively for this visualization purpose. The resulting three-dimensional (3D) beamfocusing gain diagram is illustrated in Fig. \ref{figBeamfocusing}. We set the BS at the location $ x=0 $ of x-axis and define the grid region with $ x\in\left[-20,20\right] $ $ \mathrm{m} $, $ y\in\left[0,40\right] $ $ \mathrm{m} $. Then, in Fig. \ref{figBeamfocusing}, the color gradient from blue to red represents an increasing gain, and the beamfocusing is performed at the location of the target while ensuring communication performance for users. It can be observed that the transmitted power is concentrated at the location with $ r=10.045 \,\, \mathrm{m} $ and $ \varphi=15.005\,^{\circ} $ that facilitates effective reflection. Therefore, under near-field scenarios, the proposed CRB beamfocusing scheme enables simultaneous estimation of the target's distance and angle.
\begin{figure*}[ht]
  \captionsetup[subfigure]{justification=centering} 
	\centering
	\subfloat[Distance estimation.]{
    \epsfxsize=7.0in
    \includegraphics[scale=0.5]{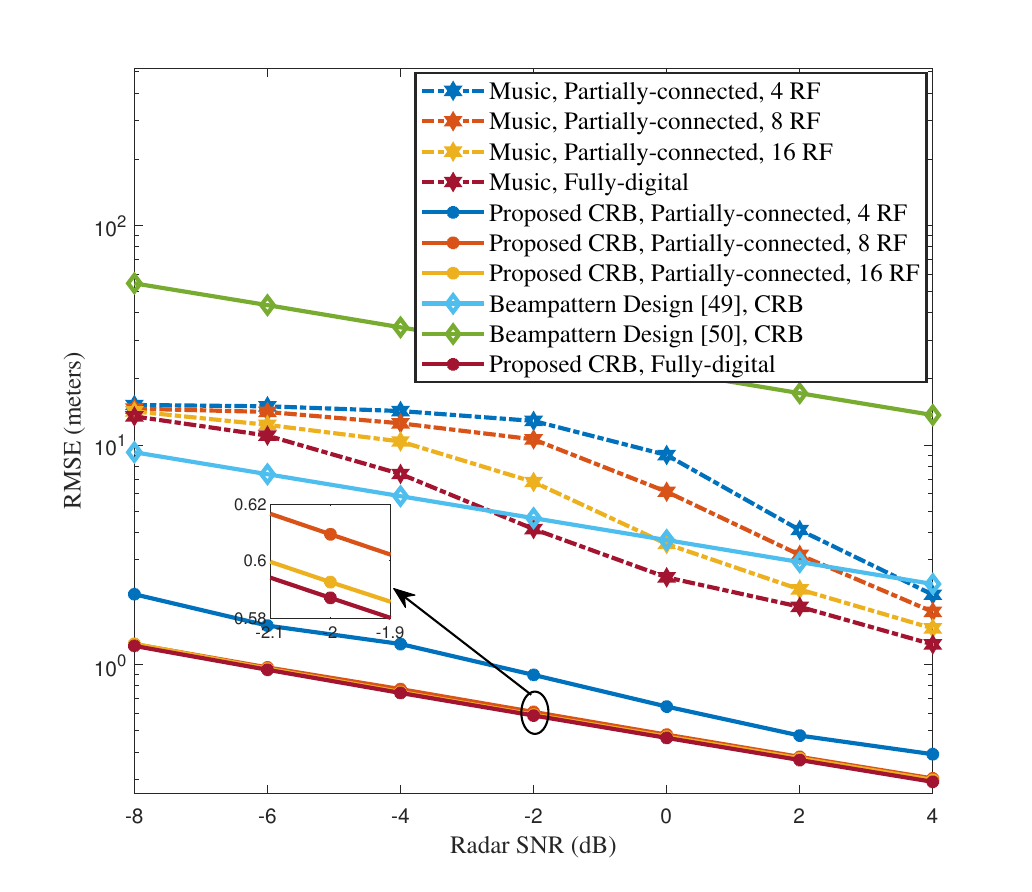}
		\label{figPowCRBPCDis}}
	\hfil
	\subfloat[Angle estimation.]{
    \epsfxsize=7.0in
    \includegraphics[scale=0.5]{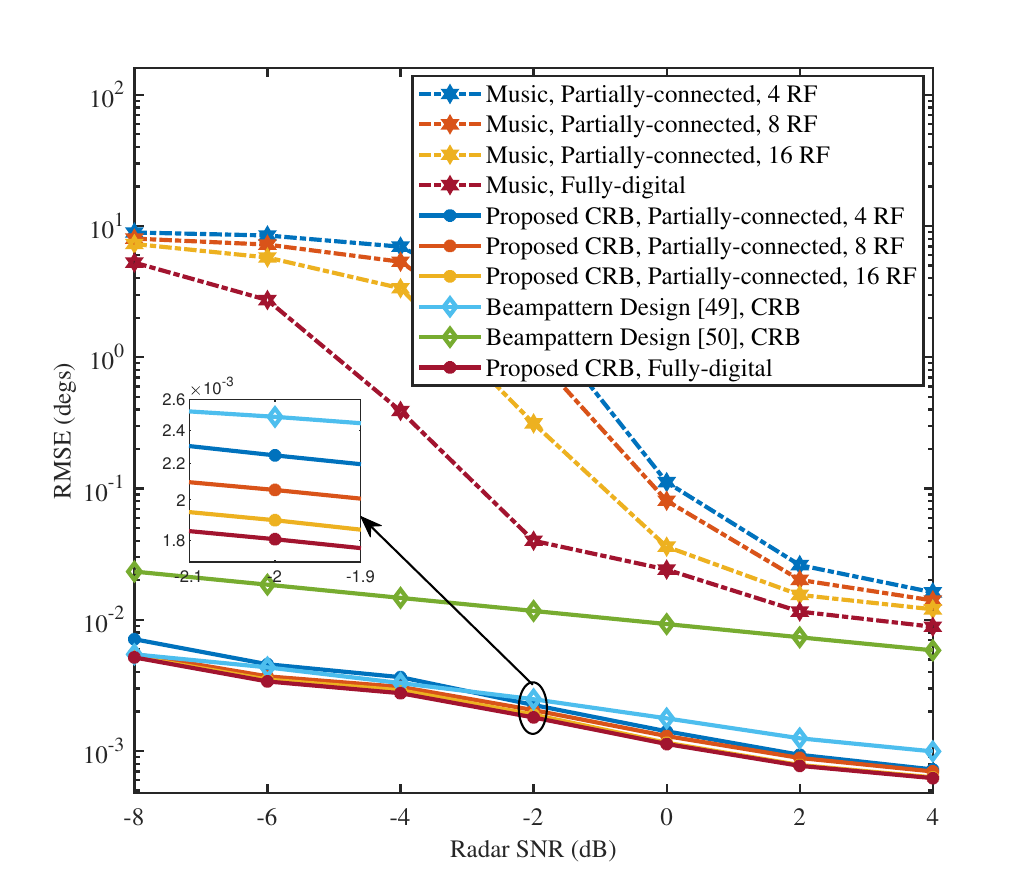}
		\label{figPowCRBPCAngle}}
	\caption{Parameters estimation performance in a point target scenario for partially-connected architecture at different SNRs and number of RF chains.}
	\label{figPowCRBPC}
  \vspace{-0.6em}
\end{figure*}

\begin{figure*}[ht]
  \vspace{-0.4em}
  \captionsetup[subfigure]{justification=centering} 
	\centering
	\subfloat[Distance estimation.]{
    \epsfxsize=7.0in
    \includegraphics[scale=0.5]{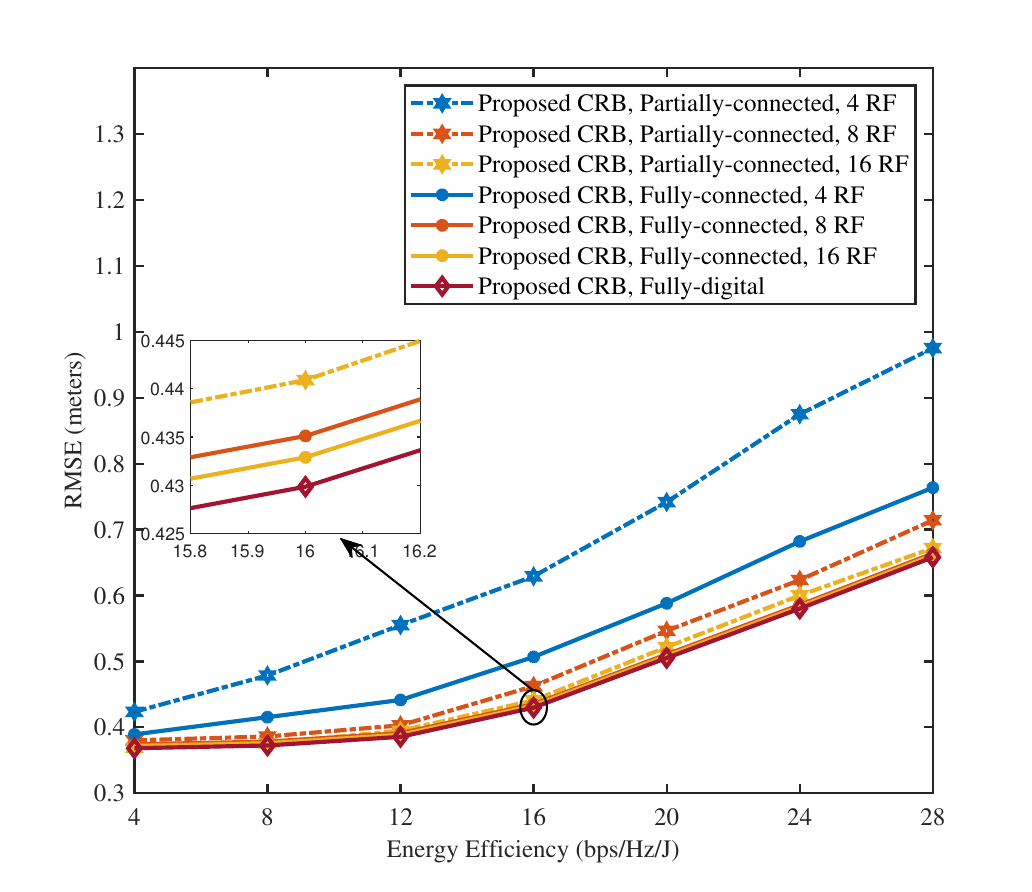}
		\label{figEECRBDis}}
	\hfil
	\subfloat[Angle estimation.]{
    \epsfxsize=7.0in
    \includegraphics[scale=0.5]{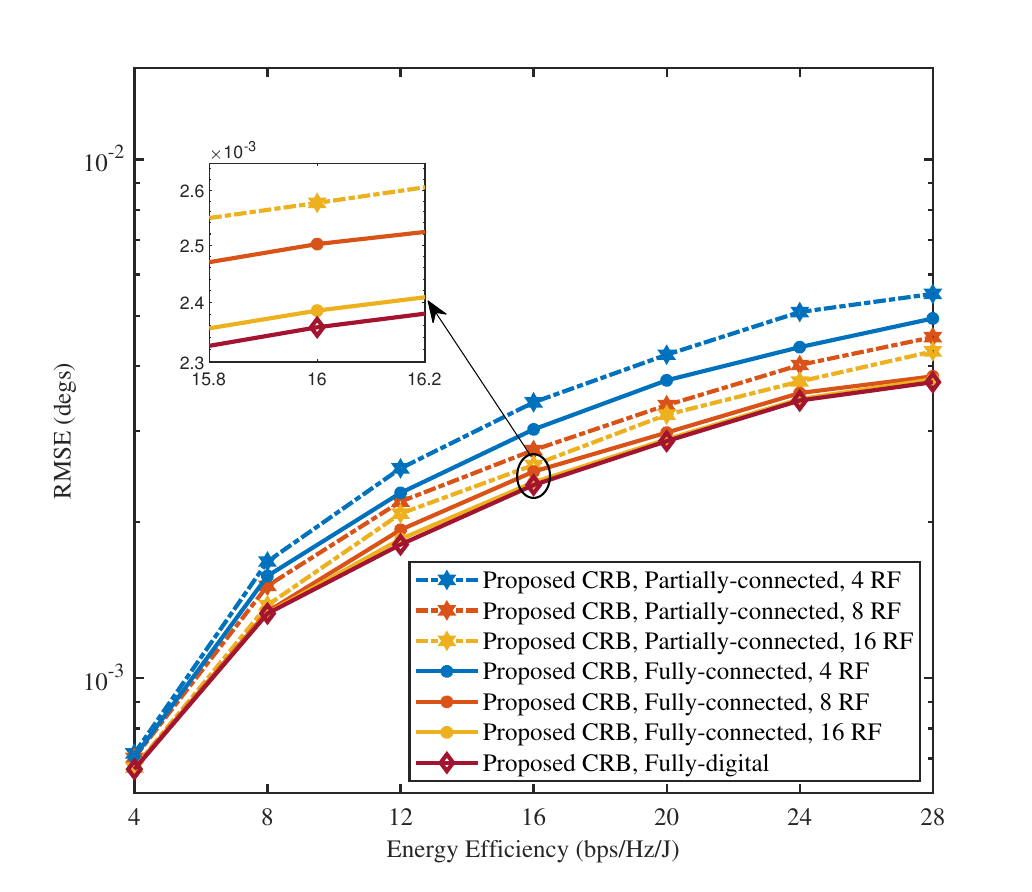}
		\label{figEECRBAngle}}
	\caption{Tradeoff between sensing and communication performance in the scenario of a point target.}
	\label{figEECRB}
  \vspace{-1em}
\end{figure*}
\vspace{-0.2em}
\subsection{Joint Beamfocusing for Point Target}
We evaluate the effectiveness of the proposed beamfocusing algorithm in a point target scenario. Specifically, the proposed CRB-based scheme is compared with several benchmark algorithms, including $\mathrm{2}\mathrm{D}$ MUSIC, and the methods proposed in \cite{GalappaththigeNear} and \cite{SunPingNear}. Furthermore, the parameter estimation performance of the proposed approach and the benchmark algorithms under both fully and partially-connected architectures is illustrated in Fig.\ref{figPowCRBFC} and Fig.\ref{figPowCRBPC}, respectively. With the radar signal-to-noise ratio (SNR) of the received echo signal gradually increasing, defined as $\frac{|\mu|^2LP}{\sigma_{n}^2}$ \cite{FLiuCramerRao}, the estimation performance of $\mathrm{2}\mathrm{D}$ MUSIC algorithm, algorithm proposed in \cite{GalappaththigeNear} and \cite{SunPingNear}, and CRB algorithm all improve accordingly. This is primarily because increasing the transmitted power enhances the performance of the estimation. Moreover, the distance estimation performance of the proposed CRB algorithm is superior to that of the $\mathrm{2}\mathrm{D}$ MUSIC algorithm as well as algorithm proposed in \cite{GalappaththigeNear} and \cite{SunPingNear}, thanks to the lower bound of the unbiased estimator. For the angle estimation of target, our method performs slightly worse than the approach in \cite{GalappaththigeNear} only under low-SNR conditions when the number of RF chains is 4. Nevertheless, in terms of overall angle estimation performance, our method still outperforms both baselines. Therefore, the proposed CRB algorithm is capable of achieving superior estimation performance for ISAC systems.\par
\begin{figure}[t]
  \captionsetup[subfigure]{justification=centering} 
  \centering
  \epsfxsize=7.0in
  \includegraphics[scale=0.5]{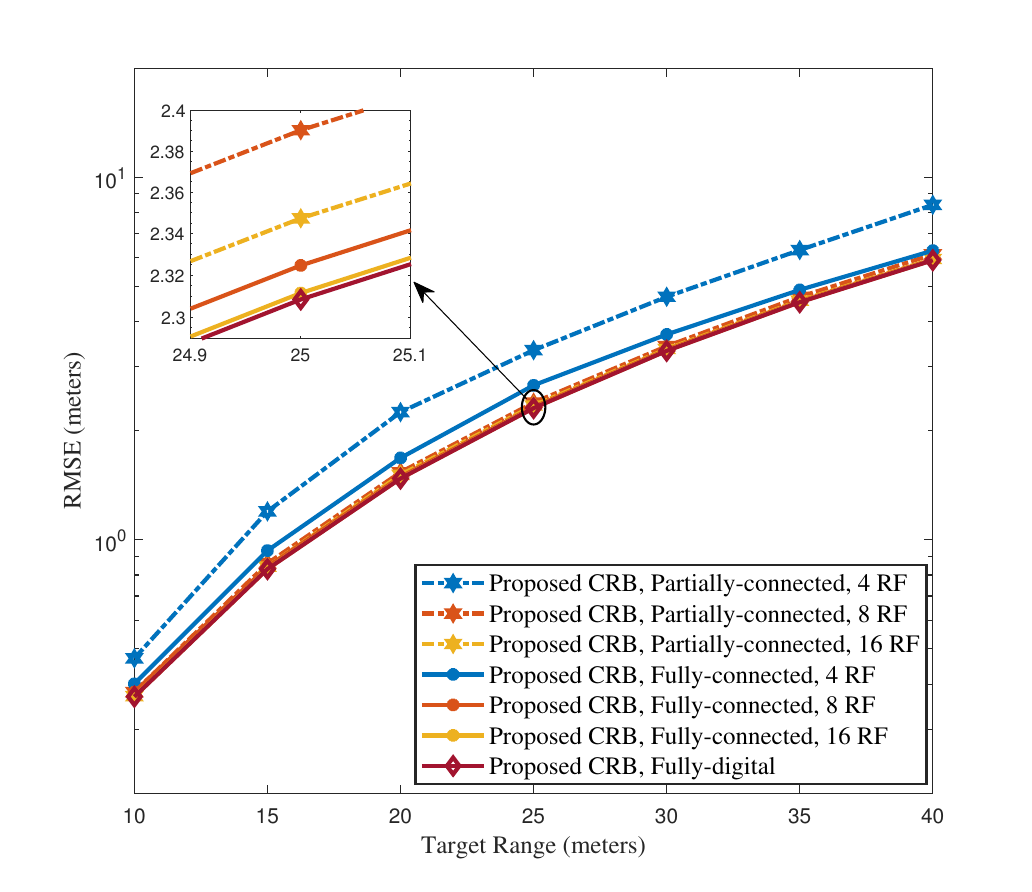}
  \caption{The CRB for different target distances.}
  \label{figCRBRanDis}
  \vspace{-1em}
\end{figure}
\begin{figure}[t]
  \centering
  \epsfxsize=7.0in
  \includegraphics[scale=0.5]{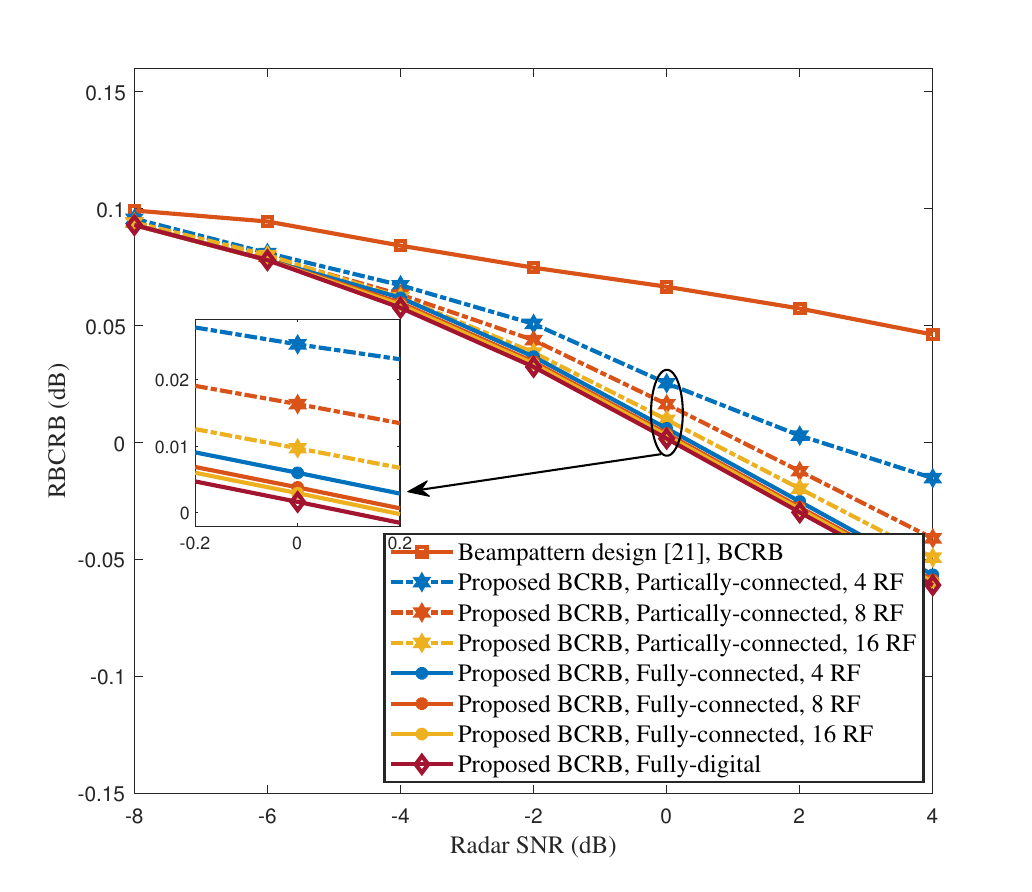}
  \caption{Parameters estimation performance in the scenario of an extended target at different SNRs and number of RF chains.}
  \label{figPowBCRB}
  \vspace{-0.6em}
\end{figure}
Fig. \ref{figPowCRBFC} and Fig. \ref{figPowCRBPC} indicate that the CRB for angle-and-distance estimation in fully and partially-connected architectures progressively approach the performance of the equivalent fully-digital solution, as the SNR increases. Compared to partially-connected architecture, the fully-connected architecture exhibits less degradation in parameter estimation performance. This is attributed to the higher DoF inherent in fully-connected architecture. Moreover, it can be observed that compared to the far-field scenario where only angle estimation is feasible, joint angle-and-distance estimation can be simultaneously achieved in the near-field region. As the SNR increases, the accuracy of distance estimation improves to centimeter level, \mbox{enabling this method to satisfy the} requirements of most high-accuracy localization applications.\par
In Fig. \ref{figEECRB}, we consider the tradeoff between the EE of the system and estimation performance. We set the maximum transmit power of the base station at $28$ $\mathrm{dBm}$ and the SINR threshold for communication users to $2$ $\mathrm{dB}$. As the EE of the system increases, the CRB of distance and angle estimation becomes higher, leading to larger estimation errors. The main reason is that the BS is forced to allocate more power for communication to enhance the EE of the system, reducing the available power allocated for sensing. Meanwhile, angle estimation is more sensitive to improvements in EE, compared to distance information, primarily because of the higher proportion of angle information in the steering vector. As the power allocated to sensing decreases, more energy is lost in the angle domain. \par
Subsequently, in Fig. \ref{figCRBRanDis}, we investigate the impact of target distance on the estimation performance for near-field ISAC systems. As the target distance $r$ increases, the RMSE of the estimated distance also increases as expected, indicating a gradual decline in distance estimation accuracy. As the target distance approaches the Rayleigh distance, the estimation accuracy degrades to the meter level. More importantly, once beyond the Rayleigh distance, the CRB of the distance estimation increases rapidly. This can be attributed to two main factors: first, as the distance increases, the near-field spherical wave model progressively approaches the conventional plane wave model; second, the attenuation of the target's echo signal becomes more pronounced with increasing distance. Consequently, the accuracy of distance estimation deteriorates.
\vspace{-1.6em}
\subsection{Joint Beamfocusing for Extended Target}
We present the simulation for an extended target estimation performance under both partially and fully hybrid architectures in Fig. \ref{figPowBCRB}. Compared to the conventional sensing metric of beamforming gain in \cite{ZHeEnergy}, our proposed BCRB-based optimization framework achieves superior estimation performance, even under hybrid architectures. This demonstrates that the BCRB serves as an effective performance metric for sensing, as it can enhance the overall estimation accuracy of the system. Moreover, as the radar SNR increases, the BCRB of both fully and partially connected architectures decreases, with diminishing returns. This effect primarily arises from the inherent singularity and limited DoF in its hybrid beamfocusing matrix. Specifically, only a few non-zero singular values increase with SNR, leading to estimation results that are increasingly dominated by prior information. Therefore, based on the simulation results, it can be concluded that, under a reasonable tradeoff between system cost and performance, appropriately increasing the number of RF chains in hybrid architectures can effectively reduce the BCRB, enhance the estimation performance of practical systems, and alleviate the reliance on prior information.

\section{Conclusion}

In this paper, we proposed an ISAC hybrid architecture beamfocusing design for the scenarios of a point and an extended target, respectively. Specifically, for the point target, we formulated an optimization problem to minimize the joint CRB of angle and distance for the point target, subject to constraints on the minimum required system EE and the SINR of communication users. For the extended target, the rank deficiency caused by hybrid beamfocusing renders the CRB of TRM estimation nonexistent. As an alternative, we utilized the BCRB as the optimization objective. Subsequently, we transformed the nonconvex optimization problems into the design of an equivalent fully-digital beamformer. Based on the equivalent fully-digital beamformer, we minimized the Euclidean distance between the product of the hybrid analog-and-digital beamformer and the equivalent fully-digital beamformer. Numerical results demonstrated that it is feasible to simultaneously estimate the distance and angle for the point target in the near field region. Additionally, a nontrivial tradeoff exists between the EE of the system and estimation performance, as higher EE of the system always leads to degraded estimation performance. Therefore, achieving a balance between the EE of the system and estimation performance is crucial in practical ISAC systems employing M-MIMO technology.


\begin{appendices}

\section{Derivation of FIM for Point Target}\label{FIMPointDeriveAppendix}
The partial derivatives of the received signal in \eqref{vecReceSignal} with respect to the vector of unknown parameters $ \bm\phi $ and $ \tilde{\bm\mu} $ are expressed as
\vspace{-0.2em}
\begin{align}
    &\frac{\partial\bm\upsilon}{\partial\bm\phi}=[\mu\mathrm{vec}(\dot{\mathbf{B}_{r}}\mathbf{X}),\,\mu\mathrm{vec}(\dot{\mathbf{B}_{\varphi} }\mathbf{X})] \label{pointPartialDerivative_r_phi},\\
    &\frac{\partial\bm\upsilon}{\partial\tilde{\bm\mu}}=[1,j]\otimes\mathrm{vec}(\mathbf{BX}), \label{pointPartialDerivative_mu}
    \vspace{-0.4em}
\end{align}
where $\dot{\mathbf{B}_{r}}$ and $\mathbf{B}_{\varphi}$ denote the partial derivatives of the TRM with respect to distance and angle, respectively. Subsequently, according to the definition of FIM \eqref{FIMDefinition} and expression \eqref{pointPartialDerivative_r_phi}, the elements of FIM with respect to the unknown parameters vector $ \bm\phi $ are given by
\vspace{-0.4em}
\begin{align}
  J_{uv}&=\frac{2}{\sigma_{n}^2}\mathrm{Re}\left\{\mu^*\mathrm{vec}(\dot{\mathbf{B}}_u\mathbf X)^H\mu\mathrm{vec}(\dot {\mathbf{B}}_v\mathbf X)\right\}\notag\\
  &=\frac{2|\mu|^2L}{\sigma_{n}^2}\mathrm{Re}\left\{\mathrm{Tr}\left(\dot{\mathbf{B}}_v\mathbf{R}_\mathrm{\mathbf{X}}\dot{\mathbf{B}}_u^H\right)\right\},
\end{align}
where the element $ J_{uv}, \forall u,v\in{\bm\phi} $. We then denote these elements as a block matrix $ \mathbf{J}_{\bm\phi \bm\phi} $, expressed as
\begin{equation}
  \mathbf{J}_{\bm\phi \bm\phi}=
  \begin{bmatrix} 
    J_{rr} & J_{r\varphi }\\
    J_{r\varphi}^T & J_{\varphi\varphi}
  \end{bmatrix}.
\end{equation}
In what follows, based on \eqref{pointPartialDerivative_mu}, \eqref{pointPartialDerivative_r_phi}, and \eqref{FIMDefinition}, the remaining elements of FIM can be divided into two block matrices, i.e., $ \mathbf{J}_{\bm\phi \tilde{\bm\mu}} $, $ \mathbf{J}_{\tilde{\bm\mu} \tilde{\bm\mu}} $, the specific expressions are given by
  \begin{align}
    \mathbf{J}_{\bm\phi \tilde{\bm\mu}}&=\frac{2}{\sigma_{n}^2}\mathrm{Re}\left\{
    \begin{bmatrix}
      \mu^*\mathrm{vec}(\dot{\mathbf{B}}_{r}\mathbf{X})^H\\
      \mu^*\mathrm{vec}(\dot{\mathbf{B}}_{\varphi}\mathbf X)^H
    \end{bmatrix}  
   \left([1,j]\otimes\mathrm{vec}(\mathbf{BX})\right) \right\}  \notag\\
   &=\frac{2L}{\sigma_{n}^2}\mathrm{Re}\left\{
    \begin{bmatrix}
      \mu^*\mathrm{Tr}(\dot{\mathbf{B}}_{r}^H\mathbf{B}\mathbf{R}_\mathrm{\mathbf{X}})\\
      \mu^*\mathrm{Tr}(\dot{\mathbf{B}}_{\varphi}^H\mathbf{B}\mathbf{R}_\mathrm{\mathbf{X}})
    \end{bmatrix}  
   [1,j] \right\},  \\
    \mathbf{J}_{\tilde{\bm\mu} \tilde{\bm\mu}}&=\frac{2}{\sigma_{n}^2}\mathrm{Re}\left\{
      \left([1,j]\otimes\mathrm{vec}(\mathbf{BX})\right)^H\left([1,j]\otimes\mathrm{vec}(\mathbf{BX})\right)\right\}\notag\\
    &=\frac{2L}{\sigma_{n}^2}\mathrm{Re}\left\{\left([1,j]^H[1,j]\right)
    \mathrm{Tr}\left(\mathbf{BTT}^H\mathbf{B}^H\right)\right\} \notag\\
    &=\frac{2L}{\sigma_{n}^2}\mathbf{I}_2\mathrm{Tr}\left(\mathbf{B}^H\mathbf{B}\mathbf{R}_\mathrm{\mathbf{X}}\right),
  \end{align}
respectively. This completes the proof.
\section{The proof of proposition 1} \label{propositionProof}
  Let $ f(\mathbf{X}) \!\geq \!f(\mathbf{Y}) $ for matrix variables $ \mathbf{X}\!\succeq \mathbf{Y}\!\succeq \!\mathbf{0} $, where $ f(\cdot) $ is a monotonically decreasing matrix function. Correspondingly, the function $ \mathrm{Tr}(\mathbf{X}^{-1}) $ is a matrix monotonically decreasing function with respect to the matrix variable $ \mathbf{X} $. On the basis of the theorem, we introduce the auxiliary matrix $ \bm\Xi \succeq \mathbf{0} $ to construct constraint $ \mathbf{J}_{\bm\phi\bm\phi}\!-\!\mathbf{J}_{\bm\phi\tilde{\bm\mu}}\mathbf{J}_{\tilde{\bm\mu}\tilde{\bm\mu}}^{-1}\mathbf{J}_{\bm\phi\tilde{\bm\mu}}^T\! \succeq \!\bm\Xi $. Therefore, minimize $ \mathrm{Tr}\{(\mathbf{J}_{\bm\phi\bm\phi}\!-\!\mathbf{J}_{\bm\phi\tilde{\bm\mu}}\mathbf{J}_{\tilde{\bm\mu}\tilde{\bm\mu}}^{-1}\mathbf{J}_{\bm\phi\tilde{\bm\mu}}^T)^{-1}\} $ is equivalent to minimize $ \mathrm{Tr}(\bm\Xi^{-1}) $. Then, we leverage the Schur complement to transform the constraint into a simple form \eqref{CRBconstraint_shurCon}. 
\section{Derivation of FIM for Extended Target}\label{FIMExtendedDeriveAppendix}
Define $ \bm\delta\!=\![\bm\beta_\mathrm{R}^T,\,\,\bm\beta_\mathrm{I}^T]^T \in \mathbb{R}^{2N_\mathrm{t}N_\mathrm{r}}$ as the unknown parameters vector, where $ \bm\beta\!=\!\mathrm{vec}(\mathbf{B})\in\mathbb{C}^{N_\mathrm{t}N_\mathrm{r}} $, $\bm\beta_\mathrm{R} $ and $ \bm\beta_\mathrm{I} $ are the real and imaginary parts of vector $ \bm\beta $, respectively. By vectorizing the received echo signal matrix for the extended target, the expression is rewritten as 
\vspace{-0.2em}  
\begin{equation}\label{vecRecSignalResMar}
  \mathbf{y}=\mathrm{vec}(\mathbf{Y})= (\mathbf{X}^T\otimes \mathbf{I}_{N_\mathrm{r}})\bm\beta+\mathbf{n}.
  \vspace{-0.2em}  
\end{equation}
Without loss of generality, we denote $ \bm\upsilon = (\mathbf{X}^T\otimes \mathbf{I}_{N_\mathrm{r}})\bm\beta $. The partial derivatives of the received signal in \eqref{vecRecSignalResMar} with respect to the unknown parameters vector $ \bm\beta_\mathrm{R} $ and $ \bm\beta_\mathrm{I} $ are given by
\vspace{-0.2em}
\begin{equation}
  \frac{\partial\bm\upsilon}{\partial\bm\beta_{\mathrm R}}=\mathbf{X}^T\otimes \mathbf{I}_{N_\mathrm{r}} \label{extendPartialDerivativeR},
  \vspace{-0.2em}
\end{equation}
\begin{equation}
  \frac{\partial\bm\upsilon}{\partial\bm\beta_{\mathrm I}}=j\mathbf{X}^T\otimes \mathbf{I}_{N_\mathrm{r}} \label{extendPartialDerivativeI}.
  \vspace{-0.2em}
\end{equation}
In light of \eqref{extendPartialDerivativeR}, \eqref{extendPartialDerivativeI}, and \eqref{FIMDefinition}, the FIM $ \mathbf{J}_{\bm\delta}\in\mathbb{R}^{2N_\mathrm{t}N_\mathrm{r}\times2N_\mathrm{t}N_\mathrm{r}} $ of the unknown parameters vector $ \bm\beta_\mathrm{R} $ and $ \bm\beta_\mathrm{I} $ can be obtained, and the FIM is partitioned into four block matrices, i.e., $\mathbf{J}_{\bm\beta_\mathrm{R}\bm\beta_\mathrm{R}},\,\mathbf{J}_{\bm\beta_\mathrm{R}\bm\beta_\mathrm{I}},\,\mathbf{J}_{\bm\beta_\mathrm{I}\bm\beta_\mathrm{R}},\,\mathbf{J}_{\bm\beta_\mathrm{I}\bm\beta_\mathrm{I}}$, with the specific expressions given by
\begin{equation}\label{FIMDiviDedMatrixExtendedCRLB}
   \mathbf{J}_{\bm\delta}=
\begin{bmatrix} 
   \mathbf{J}_{\bm\beta_\mathrm{R}\bm\beta_\mathrm{R}} & \mathbf{J}_{\bm\beta_\mathrm{R}\bm\beta_\mathrm{I}}\\
   \mathbf{J}_{\bm\beta_\mathrm{I}\bm\beta_\mathrm{R}} & \mathbf{J}_{\bm\beta_\mathrm{I}\bm\beta_\mathrm{I}}
\end{bmatrix},
\end{equation}
where the expressions of the block matrices $ \mathbf{J}_{\bm\beta_\mathrm{R}\bm\beta_\mathrm{R}} $, $ \mathbf{J}_{\bm\beta_\mathrm{R}\bm\beta_\mathrm{I}} $, $ \mathbf{J}_{\bm\beta_\mathrm{I}\bm\beta_\mathrm{R}} $, $ \mathbf{J}_{\bm\beta_\mathrm{I}\bm\beta_\mathrm{I}} $ are provided below 
\begin{align}
  \mathbf{J}_{\bm\beta_{\mathrm R}\bm\beta_{\mathrm R}}
  &=\frac{2}{\sigma_{n}^2}\mathrm{Re}\left\{(\mathbf{X}^T\otimes \mathbf{I}_{N_\mathrm{r}})^H(\mathbf{X}^T\otimes \mathbf{I}_{N_\mathrm{r}})\right\}\notag\\
  &=\frac{2L}{\sigma_{n}^2}\mathrm{Re}\left\{\mathbf{R}_\mathrm{\mathbf{X}}^T\otimes\mathbf{I}_{N_\mathrm{r}}\right\},\\
  \mathbf{J}_{\bm\beta_{\mathrm R}\bm\beta_{\mathrm I}}
  &=\frac{2}{\sigma_{n}^2}\mathrm{Re}\left\{(\mathbf{X}^T\otimes \mathbf{I}_{N_\mathrm{r}})^H(j\mathbf{X}^T\otimes \mathbf{I}_{N_\mathrm{r}})\right\}\notag\\
  &=-\frac{2L}{\sigma_{n}^2}\mathrm{Im}\left\{\mathbf{R}_\mathrm{\mathbf{X}}^T\otimes\mathbf{I}_{N_\mathrm{r}}\right\},\\
  \mathbf{J}_{\bm\beta_{\mathrm I}\bm\beta_{\mathrm R}}
  &=\frac{2}{\sigma_{n}^2}\mathrm{Re}\left\{(j\mathbf{X}^T\otimes \mathbf{I}_{N_\mathrm{r}})^H(\mathbf{X}^T\otimes \mathbf{I}_{N_\mathrm{r}})\right\}\notag\\
  &=\frac{2L}{\sigma_{n}^2}\mathrm{Im}\left\{\mathbf{R}_\mathrm{\mathbf{X}}^T\otimes\mathbf{I}_{N_\mathrm{r}}\right\},\\
  \mathbf{J}_{\bm\beta_{\mathrm I}\bm\beta_{\mathrm I}}
  &=\frac{2}{\sigma_{n}^2}\mathrm{Re}\left\{(j\mathbf{X}^T\otimes \mathbf{I}_{N_\mathrm{r}})^H(j\mathbf{X}^T\otimes \mathbf{I}_{N_\mathrm{r}})\right\}\notag\\
  &=\frac{2L}{\sigma_{n}^2}\mathrm{Re}\left\{\mathbf{R}_\mathrm{\mathbf{X}}^T\otimes\mathbf{I}_{N_\mathrm{r}}\right\}.
\end{align}

\begin{lemma}\label{lemmaComplexInverse}
   Matrix $ \mathbf{J}_{\bm\delta} $ is invertible if and only if matrix $ \mathbf{X} $ is invertible. Denoting $ \mathbf{Y}\!=\!\mathbf{X}^{-1} $, we have
\end{lemma}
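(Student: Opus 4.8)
The plan is to recognize that $\mathbf{J}_{\bm\delta}$ is nothing but the \emph{real representation} of the complex matrix $\mathbf{X}$: writing $\mathbf{X}=\mathbf{A}+j\mathbf{B}$ with $\mathbf{A},\mathbf{B}$ real and defining, for any square $\mathbf{Z}\in\mathbb{C}^{m\times m}$,
\[
\mathcal{R}(\mathbf{Z})\triangleq\begin{bmatrix}\mathrm{Re}(\mathbf{Z}) & -\mathrm{Im}(\mathbf{Z})\\ \mathrm{Im}(\mathbf{Z}) & \mathrm{Re}(\mathbf{Z})\end{bmatrix},
\]
we have $\mathbf{J}_{\bm\delta}=\mathcal{R}(\mathbf{X})$. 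The key fact I would establish first is that $\mathcal{R}(\cdot)$ is a unital ring homomorphism, i.e. $\mathcal{R}(\mathbf{Z}_1+\mathbf{Z}_2)=\mathcal{R}(\mathbf{Z}_1)+\mathcal{R}(\mathbf{Z}_2)$, $\mathcal{R}(\mathbf{I}_m)=\mathbf{I}_{2m}$, and $\mathcal{R}(\mathbf{Z}_1)\mathcal{R}(\mathbf{Z}_2)=\mathcal{R}(\mathbf{Z}_1\mathbf{Z}_2)$. Additivity and the identity case are immediate; multiplicativity follows by expanding the $2\times 2$ block product and invoking $\mathrm{Re}(\mathbf{Z}_1\mathbf{Z}_2)=\mathrm{Re}(\mathbf{Z}_1)\mathrm{Re}(\mathbf{Z}_2)-\mathrm{Im}(\mathbf{Z}_1)\mathrm{Im}(\mathbf{Z}_2)$ together with $\mathrm{Im}(\mathbf{Z}_1\mathbf{Z}_2)=\mathrm{Re}(\mathbf{Z}_1)\mathrm{Im}(\mathbf{Z}_2)+\mathrm{Im}(\mathbf{Z}_1)\mathrm{Re}(\mathbf{Z}_2)$ — and it is precisely the opposite signs on the two off-diagonal blocks that make this identity close up.

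With the homomorphism property in hand, the ``if'' direction and the inverse formula come for free: if $\mathbf{X}$ is invertible with $\mathbf{Y}=\mathbf{X}^{-1}$, then $\mathcal{R}(\mathbf{X})\mathcal{R}(\mathbf{Y})=\mathcal{R}(\mathbf{X}\mathbf{Y})=\mathcal{R}(\mathbf{I}_m)=\mathbf{I}_{2m}$ and likewise $\mathcal{R}(\mathbf{Y})\mathcal{R}(\mathbf{X})=\mathbf{I}_{2m}$, so $\mathbf{J}_{\bm\delta}=\mathcal{R}(\mathbf{X})$ is invertible with
\[
\mathbf{J}_{\bm\delta}^{-1}=\mathcal{R}(\mathbf{Y})=\begin{bmatrix}\mathrm{Re}(\mathbf{Y}) & -\mathrm{Im}(\mathbf{Y})\\ \mathrm{Im}(\mathbf{Y}) & \mathrm{Re}(\mathbf{Y})\end{bmatrix},
\]
which is the claimed expression. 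For the ``only if'' direction I would argue the contrapositive: if $\mathbf{X}$ is singular, pick $\mathbf{v}=\mathbf{v}_\mathrm{R}+j\mathbf{v}_\mathrm{I}\neq\mathbf{0}$ with $\mathbf{X}\mathbf{v}=\mathbf{0}$; then the stacked real vector $[\mathbf{v}_\mathrm{R}^T,\mathbf{v}_\mathrm{I}^T]^T$ is nonzero, and a one-line block computation gives $\mathcal{R}(\mathbf{X})[\mathbf{v}_\mathrm{R}^T,\mathbf{v}_\mathrm{I}^T]^T=[\mathrm{Re}(\mathbf{X}\mathbf{v})^T,\,\mathrm{Im}(\mathbf{X}\mathbf{v})^T]^T=\mathbf{0}$, so $\mathbf{J}_{\bm\delta}$ is singular. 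A slicker alternative that settles the equivalence in one stroke is the identity $\det\mathcal{R}(\mathbf{X})=|\det\mathbf{X}|^2$, which follows from the unitary similarity $\mathbf{P}^{-1}\mathcal{R}(\mathbf{X})\mathbf{P}=\mathrm{diag}(\overline{\mathbf{X}},\mathbf{X})$ with $\mathbf{P}=\tfrac{1}{\sqrt 2}\begin{bmatrix}\mathbf{I}_m & j\mathbf{I}_m\\ j\mathbf{I}_m & \mathbf{I}_m\end{bmatrix}$; since $\mathbf{P}$ is unitary this yields $\det\mathcal{R}(\mathbf{X})=\det(\overline{\mathbf{X}})\det(\mathbf{X})=|\det\mathbf{X}|^2$, hence $\mathbf{J}_{\bm\delta}$ is invertible $\Leftrightarrow\det\mathbf{X}\neq0\Leftrightarrow\mathbf{X}$ is invertible.

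Finally, to connect the lemma with its use in \eqref{BCRBpriorPart}, I would note that the complex matrix to which it is applied there, namely $\mathbf{R}_\mathbf{X}^T\otimes\mathbf{I}_{N_\mathrm{r}}+\tfrac{\sigma_{n}^{2}}{\sigma_{\bm\beta}^2L}\mathbf{I}_{N_\mathrm{t}N_\mathrm{r}}$ (up to the scalar $\tfrac{2L}{\sigma_n^2}$), is Hermitian positive semidefinite plus a strictly positive multiple of the identity, hence positive definite and therefore invertible, so the hypothesis of the lemma is met; combining the inverse formula with $\mathrm{Tr}(\mathcal{R}(\mathbf{Y}))=2\,\mathrm{Tr}(\mathrm{Re}(\mathbf{Y}))=2\,\mathrm{Re}(\mathrm{Tr}(\mathbf{Y}))=2\,\mathrm{Tr}(\mathbf{Y})$ (the last equality because that inverse is Hermitian) then produces the stated closed form. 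I do not foresee a genuine obstacle here: the only points that require care are the sign bookkeeping in the off-diagonal blocks (which is exactly what powers the homomorphism identity) and carrying the scalar prefactor $\tfrac{2L}{\sigma_n^2}$ cleanly through from the Fisher information matrix to the final trace.
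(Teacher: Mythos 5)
Your proof is correct and complete. The paper actually states this lemma without providing any proof at all, so there is no in-paper argument to compare against; your real-representation argument --- establishing that $\mathcal{R}(\cdot)$ is a unital ring homomorphism, reading off the inverse formula $\mathbf{J}_{\bm\delta}^{-1}=\mathcal{R}(\mathbf{X}^{-1})$, and settling the ``only if'' direction either by the null-vector contrapositive or by the identity $\det\mathcal{R}(\mathbf{X})=|\det\mathbf{X}|^2$ via the unitary block-diagonalization --- is the standard justification and is exactly what the paper implicitly relies on, both to conclude nonexistence of the CRB when $\mathbf{R}_{\mathbf{X}}$ is rank-deficient and to obtain the closed form in \eqref{BCRBpriorPart}. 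Your closing observation that the Bayesian FIM corresponds to the complex matrix $\mathbf{R}_{\mathbf{X}}^{T}\otimes\mathbf{I}_{N_\mathrm{r}}+\tfrac{\sigma_{n}^{2}}{\sigma_{\bm\beta}^{2}L}\mathbf{I}$, which is positive definite precisely because of the added identity term, correctly pinpoints why the lemma's hypothesis holds for the BCRB but fails for the plain CRB, and your trace bookkeeping (the factor $2$ from $\mathrm{Tr}(\mathcal{R}(\mathbf{Y}))=2\,\mathrm{Tr}(\mathbf{Y})$ cancelling against the $\tfrac{2L}{\sigma_n^2}$ prefactor) reproduces the paper's final expression exactly.
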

\begin{equation}
  \begin{bmatrix}
    \mathrm{Re}(\mathbf{X})&&-\mathrm{Im}(\mathbf{X})\\\mathrm{Im}(\mathbf{X})&&\mathrm{Re}(\mathbf{X})
  \end{bmatrix}^{-1}
  =
  \begin{bmatrix}
    \mathrm{Re}(\mathbf{Y})&&-\mathrm{Im}(\mathbf{Y})\\\mathrm{Im}(\mathbf{Y})&&\mathrm{Re}(\mathbf{Y})
  \end{bmatrix}.
\end{equation}
Through Lemma \ref{lemmaComplexInverse}, we discover that CRB of TRM estimation for the extended target does not exist under hybrid structure ISAC systems, as the matrix $ \mathbf{R}_\mathrm{\mathbf{X}} $ is a singular, due to the rank-deficiency of the transmit signal $\mathbf{X} \in \mathbb{C}^{N_\mathrm{t} \times L}$, i.e.,
\begin{align}\label{dataStreamRank}
   \mathrm{rank}\left(\mathbf{X}\right)&\leq\min\left\{\mathrm{rank}\left(\mathbf{T}_\mathrm{A}\right),\mathrm{rank}\left(\mathbf{T}_\mathrm{D}\right),\mathrm{rank}\left(\mathbf{S}\right)\right\}\notag\\
   &=K\leq N_\mathrm{t}^\mathrm{RF} \leq N_\mathrm{t}\leq L.
\end{align}
The maximum number of transmitted data streams is capped at $ N_\mathrm{t}^\mathrm{RF} $, which is the number of transmit RF chains. Thus, the DoFs are insufficient to recover the rank-$ N_\mathrm{t} $ matrix $ \mathbf{B} $. 
\vspace{0.2em}
\end{appendices}

\bibliographystyle{IEEEtran}
\bibliography{IEEEabrv,reference}
\vspace{11pt}


\end{sloppypar}
\end{document}